\documentclass{llncs}

\usepackage{thmtools}
\usepackage{thm-restate}

\usepackage{amsmath,amssymb}
\usepackage{xspace}
\usepackage{url}
\usepackage{hyperref}
\usepackage[dvipsnames]{xcolor}
\usepackage{algorithm}
\usepackage{algpseudocode}
\usepackage{appendix}
\usepackage{babel}
\usepackage[capitalise,noabbrev]{cleveref}
\usepackage{paralist}

\usepackage{multirow}
\usepackage{mdwlist}
\usepackage{tablefootnote}
\usepackage{lipsum}
\usepackage{booktabs}
\usepackage{thm-restate}
\usepackage{graphicx}
\usepackage{subcaption}
\usepackage[obeyFinal,textsize=footnotesize]{todonotes}

\hypersetup{colorlinks=true,allcolors=blue}





\newcommand{\ignore}[1]{}


\DeclareMathOperator{\poly}{poly}

\DeclareMathOperator{\OPT}{OPT}

\DeclareMathOperator{\dist}{dist}

\DeclareMathOperator{\diam}{diam}

\newcommand{\colnote}[3]{\textcolor{#1}{$\ll$\textsf{#2}$\gg$\marginpar{\tiny\bf #3}}}
\newcommand{\hubert}[1]{\colnote{red}{#1--Hubert}{H}}
\newcommand{\wangbo}[1]{\colnote{blue}{#1--Bo}{WB}}


\newcommand{\R}{\mathbb{R}}

\title{Fully Dynamic Algorithms for Euclidean Steiner Tree\thanks{Full version of this paper is available at \cite{arxivFullVersion}}}

\author{T-H. Hubert Chan\inst{1}\orcidID{0000-0002-8340-235X}\thanks{T-H. Hubert Chan
was partially supported by the Hong Kong RGC grant 17203122.} \and
	Gramoz Goranci\inst{2}\orcidID{0000-0002-9603-2255}\and
	Shaofeng H.-C. Jiang\inst{3}\orcidID{0000-0001-7972-827X}\and
	Bo Wang\inst{4}\orcidID{0000-0003-4232-0478}\and
	Quan Xue\inst{5}\orcidID{0009-0001-5939-7271}}

\authorrunning{Chan et al.}

\institute{The University of Hong Kong, Hong Kong, China \\ \email{hubert@cs.hku.hk} \and
	University of Vienna, Vienna, Austria \\ \email{gramoz.goranci@univie.ac.at}\and
	Peking University, Beijing, China \\ \email{shaofeng.jiang@pku.edu.cn} \and
	The University of Hong Kong, Hong Kong, China\\ \email{bwang@cs.hku.hk}\and
	The University of Hong Kong, Hong Kong, China\\ \email{qxue@cs.hku.hk}}

\begin{document}
	\maketitle
	\setcounter{footnote}{0}
	 
	\begin{abstract}
	The Euclidean Steiner tree problem asks
	to find a min-cost metric graph that connects a given set of  \emph{terminal} points $X$ in $\mathbb{R}^d$, possibly using points not in $X$ which are called Steiner points.
	Even though near-linear time $(1 + \epsilon)$-approximation was obtained in the offline setting in seminal works of Arora and Mitchell,
	efficient dynamic algorithms for Steiner tree is still open.
	We give the first algorithm that (implicitly) maintains a $(1 + \epsilon)$-approximate solution which is accessed via a set of tree traversal queries,
	subject to point insertion and deletions,
	with amortized update and query time $O(\poly\log n)$ with high probability.
	Our approach is based on an Arora-style geometric dynamic programming,
	and our main technical contribution is to maintain the DP subproblems in the dynamic setting efficiently.
	We also need to augment the DP subproblems to support the tree traversal queries. 
	\keywords{Steiner Tree \and Dynamic Algorithms \and Approximation Schemes}
\end{abstract}

	\section{Introduction}
In the Euclidean Steiner tree problem, a set $X \subset \R^d$ of points  called terminals is given, and the goal is to find the minimum cost metric graph that connects all points in $X$.
The optimal solution is a tree, and this tree can also use points other than terminals, called \emph{Steiner points}. 
Euclidean Steiner tree is a fundamental problem in computational geometry and network optimization, with many applications in various fields.



In the offline setting,
Euclidean Steiner tree is known to be NP-hard \cite{garey1977complexity}, even when the terminals are restricted to lie on two parallel lines~\cite{DBLP:journals/siamdm/RubinsteinTW97}.
An immediate approximation algorithm is via minimum spanning tree (MST),
which gives $O(1)$-approximation, and MST can be computed in near-linear time in Euclidean $\mathbb{R}^d$.
For a better approximation,
seminal works by Mitchell~\cite{DBLP:journals/siamcomp/Mitchell99} and Arora~\cite{DBLP:journals/jacm/Arora98}
developed PTAS's, i.e., $(1 + \epsilon)$-approximate polynomial time algorithms for the Euclidean Steiner tree,
which are based on geometric dynamic programming techniques.

Due to its fundamental importance, Steiner tree has also been studied in the dynamic setting.
Dynamic algorithms are algorithms that can handle modifications to the input data efficiently (in terms of running time), such as insertions and deletions.
Specifically, the study of dynamic Steiner tree in a graph setting was first introduced by Imase and Waxman \cite{imase1991dynamic}. In this setting, the objective is to find a minimum-weight tree that connects all terminals for a given undirected graph with non-negative edge weights.
Results for similar settings were also studied in followup papers \cite{megow2016power,gu2013power,gupta2014online,lkacki2015power}.


Even though an $O(1)$-approximation may be obtained from dynamically maintaining MST efficiently~\cite{DBLP:journals/dcg/Eppstein95,DBLP:journals/jacm/HolmLT01},
we are not aware of any previous work that explicitly studies the dynamic algorithm for the Euclidean Steiner tree,
especially whether one can convert the Mitchell/Arora's PTAS into an efficient dynamic algorithm to still achieve $(1 + \epsilon)$-approximation.
Technically, the Euclidean setting can differ greatly from the graph setting.
In the former, updates typically involve point insertion/deletion, whereas in the latter, updates involve more localized edge insertion/deletion.

\subsection{Our Results}
In this paper, we investigate the Euclidean Steiner tree problem in the fully dynamic setting, where terminal points in $\mathbb{R}^d$ can be inserted and deleted in each time step.
We give the first fully dynamic algorithm that (implicitly) maintains $(1 + \epsilon)$-approximation for the Steiner tree in amortized polylogarithmic time.
In particular, the algorithm maintains a) the total weight and b) the root node of this implicit tree.
Moreover, it can also answer several tree traversal queries: a) report whether a given point is in the solution tree and b) return neighbors of a given point $p$ in the solution tree, i.e., its parent and the list of its children. 

\begin{theorem}[Informal; see \Cref{thm:main}\footnote{In this informal statement we consider $d$ and $\epsilon$ as constants and their dependence are hidden in the big-O.}]
	\label{thm:main-informal}
	There exists a fully-dynamic algorithm for Euclidean Steiner tree
	that for any $0 < \epsilon, \delta < 1$ and any operation sequence 
	handles the $t$-th update in amortized $O(\poly\log(t) \log {1 / \delta})$  
	time for every $t$,
	such that with probability at least $1 - \delta$,
	the maintained Steiner tree is $(1 + \epsilon)$-approximate at all time steps simultaneously.
	Furthermore, conditioning on the randomness of the maintained solution,
	the algorithms for accessing the maintained tree are deterministic.
\end{theorem}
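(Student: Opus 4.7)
The plan is to dynamize an Arora-style PTAS for Euclidean Steiner tree. The ingredients I would take from the static setting are: (i) a randomly shifted (compressed) quadtree on the bounding box of the terminal set, (ii) a placement of $m = (d/\epsilon)^{O(d)} \cdot \polylog n$ portals on every cell boundary, and (iii) a dynamic program that, for each cell $C$ and each ``interface'' describing the portal crossings on $\partial C$ and how they pair up / partition, records the minimum cost of a portal-respecting Steiner forest inside $C$ realizing that interface. Arora's structural lemma ensures that with constant probability over the random shift, some portal-respecting solution has cost at most $(1 + \epsilon) \OPT$, so the DP value at the root cell yields a $(1+\epsilon)$-approximation of the Steiner tree.

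For the dynamic implementation, I would maintain a compressed quadtree supporting point insertion and deletion in $O(\log n)$ time, so that a single update affects only the $O(\log n)$ nodes along one root-to-leaf path (a standard amortized bound even under clustered updates). At each affected cell, the DP table is recomputed bottom-up from the tables of its (constant-arity) children. To achieve $\poly\log n$ amortized time overall, two things must be shown: that each cell's DP table has $\poly\log n$ active entries, and that each entry can be updated from the children's tables in $\poly\log n$ time. I would achieve the former by exploiting the fact that Arora's portal/interface analysis for Steiner connectivity in $\R^d$ admits a $\poly\log n$-bounded state space (the number of relevant partitions of a polylog-sized boundary being bounded after pruning), and the latter by organizing the per-cell recomputation as a small combination over the interfaces of the children.

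To answer tree traversal queries, I would augment each DP cell-interface entry with a witness pointer recording \emph{which} combination of child interfaces, and which local Steiner configuration, attains the optimum. A query ``is $p$ in the maintained tree'' is resolved by locating the leaf cell containing $p$ and reading off its witness; ``parent of $p$'' and ``children of $p$'' are answered by following witness pointers across the quadtree cells incident to $p$, possibly at several scales when an edge of the implicit tree spans multiple cells. Since witnesses are deterministic functions of the current DP state, the query procedures are deterministic conditional on the random shift, as required. Finally, to boost the constant success probability into a $1 - \delta$ guarantee simultaneously over all time steps, I would maintain $O(\log(t/\delta))$ independent copies with independent shifts and return the minimum-cost answer; independence of copies together with a union bound over time steps gives the stated high-probability approximation.

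The main obstacle I expect is the dynamic maintenance of the DP tables under adversarial update sequences. Even though a single point update touches only one root-to-leaf path, the DP values at each cell along that path depend combinatorially on the interface partitions of its children, and naive recomputation is exponential in the number of portals. Controlling this requires a tailored bounded-state DP together with an incremental recomputation procedure that reuses most of the children's tables verbatim and only re-evaluates entries whose dependency witnesses were invalidated. A secondary, more technical obstacle is designing the witness data structure so that tree traversal queries are both globally consistent (they describe edges of a single implicit tree) and answerable in $\poly\log n$ deterministic time, despite the fact that tree edges may cross several quadtree cells and several interface types.
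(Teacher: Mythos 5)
Your proposal follows essentially the same route as the paper: a randomly shifted quadtree, Arora's portal-respecting DP with the structural lemma giving a $(1+\epsilon)$-approximate $(m,r)$-light solution with constant probability, the observation that an update touches only the $O(\log n)$ cells on one root-to-leaf path, witness-augmented DP entries to support traversal queries, and $O(\log(t/\delta))$ independent copies plus a union bound for the simultaneous high-probability guarantee. The two obstacles you flag are also resolved the same way the paper resolves them: the per-entry recomputation is not exponential in the number of portals because $(m,r)$-lightness caps the \emph{active} portals per facet at $r = O(d^2/\epsilon)^{d-1}$, so enumerating child-interface combinations and crossing edge sets costs a quantity independent of $n$; and the unbounded sequence length is handled by partitioning time into doubling phases and adding fresh copies (reinserting the active set), which amortizes to the stated bound.

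One concrete piece is missing from your query mechanism. A Steiner tree has no inherent root, so witness pointers recording which combination of child interfaces attains the optimum only describe an \emph{undirected} local structure; they do not by themselves let you distinguish the parent of $p$ from the children of $p$ consistently across cells and scales. The paper's main technical addition is precisely this: each DP entry is indexed by an extra component $\varphi$ that records, for every part $S$ of the boundary partition, whether the component containing $S$ holds the designated global root and which portal of $S$ is closest to that root, together with consistency conditions forcing the directed edges in the child-combination to be oriented compatibly with the $\varphi_j$'s. Without this augmentation (or an equivalent global orientation scheme), your ``parent/children'' answers at different cells need not agree, and the root query itself is not well defined. With it added, your argument matches the paper's proof.
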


Note that here we do not consider time cost for pre-computation (see Section~\ref{sec:dynamic} for the detailed discussion), i.e., constructing an empty solution, and $t$ is the upper bound of the cardinality of the active set (see \Cref{sec:prel} for a formal definition).

Notice that our algorithm can report a $(1 + \epsilon)$-apprximation to the optimal \emph{value}. This value estimation can be generalized to TSP and related problems, including all solvable problems mentioned in~\cite{DBLP:journals/jacm/Arora98} such as $k$-MST and matching.
This is due to the fact that the geometric dynamic programming is defined on a quadtree with height $O(\log n)$, and that an update only requires to change $O(\poly\log n)$ subproblems associated with a leaf-root path of length $O(\log n)$.
We give a detailed discussion of this in \Cref{sec:tech_overview}.

However, the same approach cannot readily yield an efficient algorithm for maintaining a \emph{solution} (which is a set of edges).
One outstanding issue is that even though only $\poly \log n$ subproblems change,
the solution that is deduced from the DP subproblems can change drastically.
We thus seek for a weaker guarantee, which still offers sufficient types of queries to traverse the implicit tree, instead of explicitly maintaining all edges.
Even though one may maintain the solution within $O(1)$-approximation
by maintaining MST in $\poly\log n$ time per update~\cite{DBLP:journals/dcg/Eppstein95,DBLP:journals/jacm/HolmLT01},
it is an open question to improve the ratio to $1 + \epsilon$.
Indeed, we remark that maintaining $(1 + \epsilon)$-approximate solutions for
other related problems, such as TSP, is also open, although it may be possible to achieve a weaker guarantee that can be used to reconstruct the solution via multiple queries,
similar to our \Cref{thm:main-informal}.

\ignore{
	Our static algorithm for the Euclidean Steiner Tree problem achieves an approximation ratio of $(1+\epsilon)$ in $O(n\cdot poly(\log n))$ time. Our dynamic algorithm maintains the same high approximation ratio without sacrificing the running time.
	
	It would be good if the dynamic problem can also support query about the solution. That means, gives the underlying Steiner Tree. However, in some scenarios, the exact Steiner Tree may not be necessary or may be sensitive information. Furthermore, the structure of the Steiner Tree can change significantly after adding or deleting a point, making it difficult to maintain the entire tree.
}


\ignore{
	Note that in the static setting, Our dynamic algorithm achieves the same approximation ratio as the static one. And has almost the same time and space complexity. }


\subsection{Technical Overview}
\label{sec:tech_overview}
Conceptually, our algorithm is based on Arora's geometric DP~\cite{DBLP:journals/jacm/Arora98} which gives PTAS's for Euclidean Steiner tree and various related problems in the offline setting,
and our main technical contribution is to make it dynamic.
For simplicity, we expose the ideas only in 2D, and the input is situated in a discrete grid $[\Delta]^2$, where $\Delta$ is a parameter. 
We start with a brief review of Arora's approach.
%

\paragraph{\textbf{Review of Arora's Approach.}}
Let $n$ be the number of points in the input dataset (recalling here we are working with a static dataset), and again for simplicity we assume $\Delta = \poly(n)$.
To begin, the algorithm builds a randomly-shifted quadtree on $[2\Delta]^2$.
Specifically, the bounding square $[2\Delta]^2$ is evenly divided into four sub-squares of half the edge-length, and this process is done recursively, until reaching a unit square.
This recursive process naturally defines a tree whose nodes are the squares.
Then crucially, the entire quadtree is randomly shifted independently in each coordinate $i$ by a uniformly random value $v_i \in [-\Delta, 0]$.
Indeed, in~\cite{DBLP:journals/jacm/Arora98}, it was shown that over the randomness of the shift, with constant probability, a ``structured'' solution that is $(1 + \epsilon)$-approximate exists,
and finally, such a ``structured'' solution may be found by a dynamic program.
The dynamic program is somewhat involved, but at a high level, it defines $\poly \log n$ subproblems for each square in the quadtree, 
and each subproblem can be evaluated from the information of the $\poly\log n$ subproblems associated with the child squares.


\paragraph{\textbf{Making Arora's Approach Dynamic: Value Estimation.}} 
Our main observation is that, for a fixed quadtree, when a point $x$ is inserted or deleted, only subproblems that are associated with squares containing $x$ need to be updated/recomputed (and other subproblems do not change).
The number of levels is $O(\log \Delta) = O(\log n)$ and in each level only $O(1)$ squares are involved and need to be recomputed,
hence the number of subproblems to be re-evaluated is $\poly\log n$.
This observation directly leads to an efficient dynamic algorithm that maintains an approximation to the optimal Steiner tree.
Moreover, this is very general as it only uses the quadtree structure of DP,
hence, our result for value estimation also generalizes to other related problems such as TSP, particularly those solvable in~\cite{DBLP:journals/jacm/Arora98}.

\paragraph{\textbf{Other Queries.}}
However, in the abovementioned approach, the solution defined by the DP, which is a set of edges, is \emph{not} guaranteed to have a small change per update.
Instead, we maintain an implicit solution that supports the following operations: (i) querying the root, (ii) determining whether a point is Steiner point, and (iii) reporting for a given point $p$ its parent and a list of children.
To support these queries, we augment Arora's DP and include additional information in each subproblem.
In the original Arora's DP, a subproblem is defined with respect to a square $R$.
Intuitively, if one restricts a global solution to $R$, then it breaks the solution into several small subtrees.
Thus, in Arora's approach, the subproblems associated with $R$ also need to describe the connectivity of these subtrees.
Now, in our algorithm, we additionally designate the \emph{root} of each subtree, and encode this in the subproblem.
This information about the root is crucial for answering the several types of queries including the root query and children-list query.
Since we add this new information and requirement in the subproblems,
we need to modify Arora's DP, and eventually we show this can be maintained by enlarging the time complexity by only a $\poly\log n$ factor.




\subsection{Related Work}

\paragraph{\textbf{Steiner Tree in the Graph Setting.}}
The Steiner tree Problem is a well-known problem in graph theory that involves finding the minimum cost tree spanning a given set of vertices in a graph, i.e., given an undirected graph with edge weights and a subset of vertices, the objective of the problem is to find a tree that connects all of the specified vertices at the lowest possible cost. The tree can also include additional vertices (known as Steiner points) that are not part of the original subset. Steiner trees have been used in various applications, such as VLSI design of microchips \cite{held2011combinatorial}, multipoint routing \cite{imase1991dynamic}, transportation networks \cite{cheng2013steiner}, and phylogenetic tree reconstruction in biology \cite{hwang1992steiner}. However, the Steiner tree problem in general graphs is NP-hard \cite{hwang1992steiner} and cannot be approximated within a factor of $1 + \epsilon$ for sufficiently small $\epsilon > 0$ unless P = NP \cite{bern1989steiner}. Therefore, efficient approximation algorithms are sought instead of exact algorithms. A series of papers gradually improved the best approximation ratio achievable within polynomial time from $2$ to $1.39$ for the Steiner tree problem in general graphs \cite{matsuyama1980approximate,zelikovsky199311,berman1994improved,zelikovsky1996better,promel1997rnc,karpinski1997new,hougardy19991,robins2005tighter,byrka2010improved}.

\paragraph{\textbf{Fully-dynamic Algorithms for Steiner Tree.}}
As noted by \cite{cheng2013steiner}, the understanding the dynamic complexity of Steiner Tree problem is an important research question due to its practical applications in transportation and communication networks. In the dynamic setting, a sequence of updates would be made to an underlying point set, and the goal of the dynamic Steiner tree problem is to maintain a solution dynamically and efficiently for each update. Imase and Waxman \cite{imase1991dynamic} first introduced this problem, and various papers have explored this direction \cite{megow2016power,gu2013power,gupta2014online}. For general graphs, Łacki and Sankowski \cite{lkacki2015power} achieved the current best approximation ratio for the dynamic setting at $(6+\epsilon)$, utilizing a local search technique. Let $D$ denotes the stretch of
the metric induced by $G$, the time complexity for each addition or removal is $
\tilde{O}(\sqrt{n} \log D)$ (note that $n$ is the size of the terminal set). 
However, there is no polynomial time approximation scheme for the dynamic Steiner tree problem. 

	\section{Preliminaries}
\label{sec:prel}
\paragraph{\textbf{Notations.}}
For integer $n \geq 1$, let $[n] := \{1, \ldots, n\}$.
In Euclidean space $\mathbb{R}^d$,
a metric graph is a graph whose vertex set is a subset of $\mathbb{R}^d$,
and the edges are weighted by the Euclidean distance between the end points.
For any $x, y \in \mathbb{R}^d$, the distance between them is defined as $\dist(x, y) := \|x - y\|$.
For a point set $S \subset \mathbb{R}^d$,
let $\diam(S) := \max_{x, y \in S}{\dist(x, y)}$ be its diameter.
When we talk about graphs (e.g., tours, trees) we always refer to a metric graph.
For a (metric) graph $G=(V, E)$ where $V$ is the set of vertices and $E$ is the set of edges. The weight of an edge $(x, y)\in E$ is defined as $w(x, y)$. Define the weight of the graph $w(G)=\sum_{(x, y)\in E} w(x, y)$. (Note that in this paper, $w(x, y)=\dist(x, y)$.) 

When we talk about a $d$-dimension hypercube in $\mathbb{R}^d$, we consider both its boundary (its $2d$ facets) and interior. 
For a set of points $X$, we will use $\OPT(X)$ to denote the cost of the optimal Steiner tree with respect to $X$. 

\begin{definition}[Euclidean Steiner Tree]
	\label{def:steiner_tree}
	Given $n$ \emph{terminal} points $X \subset \mathbb{R}^d$, 
	find a min-weight graph that connects all points in $X$.
	In a solution of Steiner tree, non-terminal points are called Steiner points.
\end{definition}

\paragraph{\textbf{Implicit Steiner Tree and Queries.}}
As mentioned, our algorithm maintains an (implicit) $(1+\epsilon)$-approximate Steiner tree along with its cost and the root node. By implicit, we mean we do not maintain an explicit edge set; instead, we provide access to this tree indirectly via queries.
The algorithm supports the \emph{membership queries} and the \emph{neighbor queries}. The concrete definitions are given below.

\begin{itemize}
	\item Membership queries. {Given a point $u$, if $u$ is a node in the implicit Steiner tree, 
	 return ``Yes''; otherwise, return ``No''.}
	\item Neighbor queries. {Given a point $u$, if $u$ is not a node in the implicit Steiner tree, return ``Null'';
	if a point $u$ is a node in the underlying Steiner Tree, return its all neighbors in the implicit Steiner tree, i.e., the parent and the list of children of $u$. }
\end{itemize}

\paragraph{\textbf{Fully Dynamic Setting.}}
We study the Euclidean Steiner Tree problem in a dynamic setting. Suppose $X$ is the underlying dataset, and we assume $X \subseteq [\Delta]^d$ for some integer parameter $\Delta$. 
This is a natural assumption, since in the typical setting where the set $X$ has at most $n$ points and the aspect ratio of the distances is $\poly(n)$,
the dataset can be discretized and rescaled so that it fits into $[\Delta]^d$ grid for $\Delta = \poly(n)$.
For any time step $t\in\mathbb{N}$, an update operation $\sigma_t\in X\times \{+, -\}$ consists of a point $x_t\in \mathbb{R}^d$ and a flag indicating whether the update is an insertion ($+$) or a deletion ($-$).

Multiple updates can be performed on the same coordinate, involving the insertion or deletion of multiple points. We track a `multiplicity' value for each coordinate, initially set to 0. When a point is inserted, the coordinate's multiplicity increases by 1, and when a point is deleted, the multiplicity decreases by 1.
Let $X_t$ be the \emph{active set} at time step $t$, which contains all the co-ordinates (we will view them as points) that have non-zero multiplicity. 
Note that for Steiner tree, the multiplicity of points do not affect the cost (as they can be connected with $0$-length edges).
An update operation is \emph{valid} if a point to be deleted is in the active set. 
Queries will be made after each update.

\paragraph{\textbf{Amortized Time Cost.}}
Suppose $A_t$ is the cost for time step $t$, the amortized time cost for $t$ is the average time cost for the first $t$ steps, i.e., $\frac{1}{t}\sum_{i=1}^{t} A_i$.

	\subsection{Review of Arora's Approach~\cite{DBLP:journals/jacm/Arora98}}
\label{sec:rev}

\paragraph{Randomly-shifted Quadtree.}
Assume the input is in $[\Delta]^d$.
Without loss of generality, we assume $\Delta$ is a power of 2.
For each $i \in \{1, 2, \dots, d\}$,
pick $s_i$ uniformly at random in $[0, \Delta]$. 
Take the hypercube $R_{\log \Delta}$ as $[2\Delta]^d$ shifted by $- s_i$
in each coordinate $i$, and this still contains the dataset.
We evenly subdivide this hypercube into $2^d$ child clusters,
and we continue this process recursively until the hypercube is of diameter at most $1$.
This process naturally induces a $2^d$-ary tree , i.e., $R_{\log \Delta}$ is the root, and each non-leaf cluster has $2^d$ child clusters.
For $i \in \{0, \ldots, \log \Delta\}$, a level-$i$ cluster is a hypercube $R_i$ with side-length $2^i$. Denote $\mathcal{R}_i$ as the collection of all such 
hypercubes with side-length $2^i$.
Denote $\mathcal{R} := \cup_{i=0}^{\log \Delta} \mathcal{R}_i$.

\ignore{
	Let $R_0$ be the square of edge-length $2^L$ and lower-left corner coordinate
	$(-s_x, -s_y)$ where $s_x, s_y$ are independent uniform random variables on $[0, \Delta]$.
	Clearly, $X$ is contained in $R_0$.
	Subdivide $R_0$ recursively: each time evenly divide the current square into $4$
	squares of half the edge length, until side length of $0.5$ is reached
	(noting that each square of side-length $0.5$ contains at most one point from $X$).
	Denote this collection of squares by $\mathcal{R}$,
	and the squares of side-length $2^i$ are called level-$i$ squares.
}

\paragraph{$(m, r)$-Light Graph.} 
For every hypercube $R$ in $\mathcal{R}$, designate its $2^d$ corners
as well as $m$ evenly placed points on each of $R$'s $2d$ facets as portals. 
Specifically, portals in a facet is an orthogonal lattice in the $(d-1)$-dimension hypercube.
If the side length of the hypercube is $\ell$ and the spacing length between the portals is $\alpha\cdot \ell$, then we will have $m \leq (\frac{1}{\alpha}+1)^{d-1}$.
For example, in $2D$, a hypercube is a square, and portals in this case is $4m-4$ points evenly placed on the $4$ boundaries.
An $(m, r)$-light graph is a geometric graph such that for all $R$ in $\mathcal{R}$,
it crosses each facet in $R$ only via the $m$ portals for at most $r$ times.

\begin{theorem}[Structural Property~\cite{DBLP:journals/jacm/Arora98} in $\R^d$]
	\label{thm:struct}
	For every $0 < \epsilon < 1$ and terminal set $X$ in $\R^d$,
	there is a (random) Steiner tree solution $T'$ defined with respect to the randomly-shifted quadtree $\mathcal{R}$ of $X$,
	such that $T'$ is $(m := O(d^{\frac{3}{2}} \cdot \frac{\log \Delta}{\epsilon})^{d-1}, r := O(\frac{d^{2}}{\epsilon})^{d-1})$-light\footnote{We state the exact dependence of $d$ which was not accurately calculated in \cite{DBLP:journals/jacm/Arora98}, see the appendix for the details.}\label{ftnote:parameters} 
	and
	\begin{align*}
		\Pr[w(T') \leq (1 + \epsilon) \cdot \OPT] \geq \frac{1}{2}.
	\end{align*}
\end{theorem}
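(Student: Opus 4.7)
The plan is to adapt Arora's strategy of modifying an optimal Steiner tree $T^{\ast}$ in two phases---\emph{portalization} and \emph{patching}---while carefully tracking the dimension-$d$ factors, and then apply Markov's inequality to convert the expected overhead into the stated constant-probability bound. The quadtree $\mathcal{R}$ and its random shift $(s_1,\ldots,s_d)$ are fixed; the goal is to exhibit, for each realization, a modified tree $T'$ that is $(m,r)$-light, and show $\E[w(T')-\OPT]\le\tfrac{\epsilon}{2}\OPT$ over the shift.

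For portalization, I would reroute every crossing point of $T^{\ast}$ with a facet $F$ of a quadtree hypercube $R\in\mathcal{R}$ to the nearest portal on $F$. If $F$ has side length $\ell=2^i$ and its portals lie on an orthogonal $(d-1)$-dimensional lattice of spacing $\alpha\ell$, then each crossing moves at most $\tfrac12\sqrt{d-1}\,\alpha\ell$. The standard charging argument, which uses the randomness of the shift, shows that for any edge of $T^{\ast}$ of length $w$, the expected number of level-$i$ axis-orthogonal crossings it contributes in direction $j$ is at most $w/2^i$. Summing the per-crossing displacement over the $d$ coordinate directions and the $\log\Delta$ levels gives total expected portalization overhead $O(d^{3/2}\,\alpha\log\Delta)\cdot w(T^{\ast})$. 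Choosing $\alpha=\Theta(\epsilon/(d^{3/2}\log\Delta))$ bounds this overhead by $\tfrac{\epsilon}{4}\OPT$ and yields $m\le(1/\alpha+1)^{d-1}=O(d^{3/2}\log\Delta/\epsilon)^{d-1}$, matching the stated bound.

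For patching, I would invoke a $d$-dimensional patching lemma that, given any facet $F$ of side $\ell$ still crossed $c>r$ times, connects the crossings via a $(d-1)$-dimensional Steiner backbone along $F$ so that $F$ is crossed at most $r$ times afterwards, with additive cost $O(\sqrt{d}\,\ell)$ per crossing removed (when $r^{1/(d-1)}$ is chosen large enough to use Beardwood--Halton--Hammersley-type Steiner bounds on the facet). Charging this overhead to the removed crossings and reusing the same level-wise expectation argument as in portalization gives total expected patching overhead $O(d\cdot r^{-1/(d-1)})\cdot w(T^{\ast})$. Choosing $r^{1/(d-1)}=\Theta(d^2/\epsilon)$ bounds this by $\tfrac{\epsilon}{4}\OPT$ and yields $r=O(d^2/\epsilon)^{d-1}$. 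The resulting $T'$ is $(m,r)$-light by construction and satisfies $\E[w(T')]\le(1+\tfrac{\epsilon}{2})\OPT$; Markov's inequality applied to the non-negative quantity $w(T')-\OPT$ then gives $\Pr[w(T')\le(1+\epsilon)\OPT]\ge\tfrac12$.

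The main obstacle is the patching step in higher dimensions. In 2D, Arora's patching lemma is a clean constant-cost fix of a single segment on a one-dimensional facet, but for $d\ge 3$ the facet is $(d-1)$-dimensional, the patch must be a genuine $(d-1)$-dim Steiner tree on up to $c$ portal-snapped points, and the patch itself crosses interior sub-facets of $R$, which could in principle cascade into further patching. Controlling this cascade---so that the dimension exponent in $r^{1/(d-1)}$ grows only as $d^2$ rather than the naive $d^{d-1}$ one would obtain by blindly recursing---requires an inductive analysis in which newly introduced sub-facet crossings are absorbed either into the portalization budget at a strictly deeper level or into a slightly larger constant in front of $r$. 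This is precisely the subtlety highlighted by the footnote in the theorem statement and is where the proof genuinely goes beyond the original 2D argument of Arora.
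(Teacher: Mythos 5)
Your proposal follows exactly the route the paper relies on: \Cref{thm:struct} is imported from Arora's structure theorem (portal-snapping plus patching, with the expected overhead over the random shift bounded by $O(\epsilon)\cdot\OPT$ and converted to a constant-probability guarantee via Markov), the paper's only addition being the recomputation of the $d$-dependence of $m$ and $r$, which it defers to the appendix. Your parameter accounting lands on the stated bounds; just note that the per-crossing patching cost should read $O(\sqrt{d}\,\ell\,c^{-1/(d-1)})$ rather than $O(\sqrt{d}\,\ell)$ (your stated total $O(d\cdot r^{-1/(d-1)})\cdot w(T^{\ast})$ already presumes this), and the $d\ge 3$ patching cascade you flag as the main obstacle is precisely the point that the paper, too, leaves to Arora's original argument rather than reproving.
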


\paragraph{Dynamic Programming (DP).}
One can find the optimal $(m, r)$-light solution for Steiner tree in near-linear time
using dynamic programming (for $m, r$ guaranteed by Theorem~\ref{thm:struct}).
In the original DP for Steiner tree~\cite{DBLP:journals/jacm/Arora98}, 
a DP entry is indexed by a tuple $(R, A, \Pi)$, where
\begin{itemize}
	\item $R$ is a hypercube in $\mathcal{R}$,
	\item $A$ is a subset of the \emph{active} portals of $R$ such that
	$A$ contains at most $r$ portals from each facet in $R$,
	\item $\Pi$ is a partition of $A$.
\end{itemize}
If $R$ is a leaf node, then $|A| \leq 1$.  In the case
that $R$ contains a terminal, then $|A| = 1$;
if $|A| = 0$, this means that $R$ contains no Steiner point.
The value of a subproblem $(R, A, \Pi)$
is the minimum weight of an $(m, r)$-light graph $G$,
such that for $A \neq \emptyset$,
\begin{itemize}
	\item $G$ connects all points $X \cap R$ to some point in $A$, and
	\item points in the same part of $\Pi$ are connected in $G$.
\end{itemize}
If $A = \emptyset$, then $X \subseteq R$ and the value is the 
minimum weight of an $(m, r)$-light graph $G$ such that
$G$ is contained inside $R$ and all terminals in $X$ are connected
in $G$.


\ignore{
\paragraph{Observations and Issues.}
The main observation is that, for a fixed decomposition,
when a point $x$ is inserted or deleted, only subproblems that are associated with clusters containing $x$ need to be updated/re-computed (and other subproblems do not change).
Since there are at most $O(\log n)$ levels and in each level only one cluster is involved,
the number of subproblems to be re-evaluated is $O(\log n) \cdot (\log n)^{\poly(\epsilon^{-1})} = \poly\log n$.

Hence, it seems a naive dynamic algorithm that simply updates those subproblems could solve the problem of reporting the new approximate \emph{value}.
(Note that in this approach, the solution, which is a set of edges,
is \emph{not} guaranteed to have a small change/difference per update.)

However, one still needs to pay attention to at least the following issues.
\begin{compactitem}
	\item Since Arora's guarantee is randomized (see Theorem~\ref{thm:struct}), 
	we cannot always maintain the same decomposition,
	and it is necessary to use a new random shift to reconstruct the tree when sufficiently many updates are processed.  \hubert{Actually, I think that we do not need fresh randomness
		if we do not ensure approximation guarantees for all time steps.  In fact, it will not help to
		use fresh randomness anyway.}

	\item The updates could change the scale of the instance, affecting 
	the general transformation and the choice of the bounding box particularly.
	We may need to add an assumption that the data points always
	lie in a bounded region (and our algorithm would have some dependence in the magnitude of the region).
	
	\hubert{When more points are inserted, the number of levels actually needs to go deeper.}
	
	\hubert{When points are deleted, the bounding box might be smaller.  Also, we may not need to consider the entire original quadtree, because the original leaves are too deep.}
	
	\hubert{Restricted assumption: the bounding box does not change, and the number of active points
		is always in some range $[\frac{n}{2}.....2n]$.}
	
\end{compactitem}
}
	\section{Static Algorithms for Euclidean Steiner Tree}
\label{sec:offline}
In this section, we introduce an algorithm that answers the queries in the static setting, as stated in the following \Cref{thm:offline}. We will show how to make the algorithm fully-dynamic in the next section.

\begin{theorem}
	\label{thm:offline}
	There is an algorithm that,
	for every $0 < \epsilon < 1$, given as input an $n$-point dataset $X \subset \mathbb{R}^d$, 
	pre-processes $X$ and computes an implicit Steiner tree $T'$ for $X$ in 
	$O(n)\cdot (\log n)^{O(d^{2d}\epsilon^{-d})}\cdot 2^{O(4^d d^{4d} \epsilon^{-2d})}$ time such that $T'$ is $(1 + \epsilon)$-approximate with constant probability.
	After this pre-processing, the algorithm can deterministically report the weight of $T'$, report the root node of $T'$, answer membership query and answer neighbor query, in $O(1)$, $O(\frac{d^2}{\epsilon})^{d-1}$, $O(2^d\log n)$ and $(\log n)^{O(d^{2d}\epsilon^{-d})}\cdot 2^{O(4^d d^{4d} \epsilon^{-2d})}$ time, respectively. 
\end{theorem}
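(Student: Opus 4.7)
The plan is to follow Arora's geometric DP framework as reviewed in \Cref{sec:rev} and augment it with extra bookkeeping so that the four kinds of queries (weight, root, membership, neighbor) can be answered within the claimed times. The preprocessing has three phases. In the first phase, I build a (compressed) randomly-shifted quadtree $\mathcal{R}$ on $[2\Delta]^d$ with $\Delta = \poly(n)$, keeping only the $O(n\log n)$ hypercubes that are either ancestors of a terminal or siblings of such ancestors. In the second phase, I run Arora's DP bottom-up over $\mathcal{R}$: a DP cell $(R,A,\Pi)$ is evaluated by iterating over all consistent choices of child subproblems, i.e., all tuples of $((R', A', \Pi'))_{R' \text{ child of } R}$ that respect the portal-crossing constraint on $R$'s internal facets and whose induced connectivity on $A$ equals $\Pi$. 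In the third phase, I augment each stored DP entry with (i) back-pointers to the child subproblems that realized the optimum, and (ii) for each part $P \in \Pi$ a designated \emph{local root}, chosen to be either a portal in $P$ (if $|A|>0$) or a terminal in $R$ (if $A = \emptyset$), together with a pointer indicating toward which child-part-local-root this root is attached. These augmentations are computed in the combining step at no asymptotic overhead.

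For the running time, the number of DP entries per cell is bounded by the choices of $A$ (at most $m^{2dr}$, since at most $r$ portals per facet out of $m$ on each of the $2d$ facets) times the number of partitions of $A$ (the Bell number on $|A| \le 2dr$ elements is $2^{O(dr \log dr)}$). Plugging in the values of $m$ and $r$ from \Cref{thm:struct} gives the factor $(\log n)^{O(d^{2d}\epsilon^{-d})} \cdot 2^{O(4^d d^{4d}\epsilon^{-2d})}$ per cell; multiplying by the $O(n \log n)$ cells and the per-cell combining cost yields the stated preprocessing time (absorbing $\log n$ factors into the exponent). By \Cref{thm:struct}, the DP entry at the root cell $R_{\log\Delta}$ with $A = \emptyset$ equals the weight of an optimal $(m,r)$-light solution, which is a $(1+\epsilon)$-approximation of $\OPT(X)$ with probability at least $1/2$; reporting this weight takes $O(1)$.

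Next I describe how the augmented DP supports the remaining queries. The canonical \textbf{root} of $T'$ is chosen once during preprocessing as the local root stored at the top-level entry; finding it requires examining at most $r = O(d^2/\epsilon)^{d-1}$ entries on the top cell's facets to consistently select the globally-rooted component. A \textbf{membership} query for $u$ is answered by descending the quadtree along the $O(\log n)$ cells that contain $u$; at each level I examine the $O(2^d)$ children and test whether $u$ coincides with a portal that is active in the selected DP entry, giving $O(2^d \log n)$ time. A \textbf{neighbor} query for $u$ is answered by locating the cells adjacent to $u$ in the quadtree, then unfolding the stored back-pointers at those cells to read off $u$'s incident edges within each cell, together with the portal-crossings that join pieces across facets; since edges of an $(m,r)$-light tree incident to $u$ pass through a bounded number of DP entries on cells neighboring $u$, the work is dominated by inspecting those entries, giving the claimed neighbor-query bound.

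The main obstacle I anticipate is making the local-root designation globally consistent: when a part $P$ in cell $R$ is formed by merging parts contributed by several children, the local root of $P$ must be consistent with the local roots of the contributing child parts so that a global rooted orientation is induced, and this orientation is what allows a neighbor query to distinguish the parent of $u$ from its children. I handle this by storing, for each part, not just a designated root but also an orientation flag per child sub-part indicating whether its local root is identified with $P$'s root or is an ``interior'' point; propagating these flags bottom-up is routine but must be carried out carefully to avoid ambiguity at cells where $A = \emptyset$. A secondary obstacle is ensuring that membership tests for Steiner points (which arise only as active portals in the chosen solution) can be made deterministic given the random preprocessing, for which it suffices to store the DP-optimal selection along each root-to-leaf path in a searchable structure during preprocessing.
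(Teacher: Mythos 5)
Your overall architecture matches the paper's: a randomly-shifted quadtree, Arora's $(m,r)$-light DP, back-pointers to the optimal child combination, and a per-part ``root'' designation that turns the unrooted Steiner tree into a rooted one so that neighbor queries can distinguish parent from children. Your entry-counting ($m^{O(dr)}$ choices of $A$ times a Bell-number factor for $\Pi$, times $O(n\log n)$ cells, times the per-cell combining cost) reproduces the stated preprocessing bound, and your query routines for weight, membership, and neighbors are in the right shape.

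The genuine gap is in how you compute the root designation. You propose to run the \emph{original} DP indexed by $(R,A,\Pi)$ and then attach, ``in the combining step at no asymptotic overhead,'' a designated local root per part plus orientation flags, propagated bottom-up. But the root of a part $S$ is defined as the node of $S$ closest to the \emph{global} root, and which node that is cannot be decided bottom-up: it depends on which component ultimately contains the global root and on how the pieces are stitched together above $R$, information that is only available at the top of the recursion. Your own ``main obstacle'' paragraph identifies this, but the proposed fix (bottom-up flag propagation) does not resolve it --- a child part's local root may need to be re-designated once you see how its component attaches to the rest of the tree at an ancestor cell. The paper's resolution is to make the designation part of the DP \emph{index}: entries are $(R,A,\Pi,\varphi)$ where $\varphi$ records, for each part, whether its component contains the global root and which portal is its root, and the combine step only accepts child tuples and directed edge sets $E$ whose orientations (root may have incoming edges, all other portals only outgoing) are consistent with the $\varphi$'s. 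This costs an extra $r^{O(r)}$ factor per cell, absorbed into the stated bounds. An alternative repair that preserves your structure would be a single top-down orientation pass over the realized optimal solution after the bottom-up DP, but as written your bottom-up scheme is ambiguous exactly where the paper's consistency conditions do the work. A secondary, smaller point: the paper does not store the edge sets $E$; the neighbor query re-enumerates all consistent $(E,\mathcal{I}')$ combinations at the relevant cells at query time, which is why its cost equals the per-cell combining cost rather than something smaller --- your ``read off incident edges from stored back-pointers'' needs to either store the realized $E$ per selected entry or do this re-enumeration.
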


We impose the quadtree  decomposition with random shifts on the input terminal set $X$ as described in Section~\ref{sec:rev}.
Since there is no inherent parent-child relationship between nodes in a Steiner tree, the global root can be chosen arbitrarily.
We assume that the global root is selected based on a predefined priority assigned to all clusters. Specifically, for each level of the quadtree, a deterministic 
function is used to determine which cluster should contain the global root. The deterministic property ensures that we do not have the issue of failure probability after answering however many quries, and that the failure probability only needs to be analyzed for the error of the implicity Steiner tree

After we construct the quadtree, we can find an $(1+\epsilon)$-approximate solution by applying the original dynamic programming (DP) in Arora's approach,
since the original DP can find the optimal $(m, r)$-light solution and \Cref{thm:struct} ensures the existence of a solution that is $(m, r)$-light and $(1+\epsilon)$-approximate with $\frac{1}{2}$ probability.
Note that since $n=\Delta^d$ is an upper bound on the cardinality of $X$, we analyze the time complexity in terms of $n$. Also the depth of the quadtree is $O(\log \Delta)$, hence for simplicity, we slightly relax the bounds by using $O(\log n)$ as the depth.  

As mentioned, to answer the other types of queries, our plan is to augment the original DP of Arora
such that each DP entry is indexed by extra objects and each entry stores a small data structure in addition to a value.
The focus is to analyze the running time of computing the augmented DP, as well as the correctness and the time complexity of answering each type of these queries.
Then, \Cref{thm:offline} would follow from all of these components.

\paragraph{\textbf{Augmented DP for Steiner Tree.}}
Each entry in the augmented DP is indexed by $(R, A, \Pi, \varphi)$,
where $\varphi$ contains additional information for each part $S$ in $\Pi$ (recalling that $\Pi$ is the partition of $A$). 
Specifically, $\varphi(S)$ specifies
(i) whether the connected component containing $S$ inside $R$ includes
the global root of the entire Steiner tree, and (ii) the node in $S$ that is the closest to the global root, which we shall call the \emph{root} of $S$, denoted as $r(S)$.
In this context, closeness is defined as the number of edges between a point and the global root.
It is worth noting that, since the root of $S$, namely $r(S)$, is the point that is closest to the global root,
it ensures that all points in the connected component containing $S$ (inside $R$)
must connect to the global root via $r(S)$.

\paragraph{\textbf{Data Structure in Each Entry.}}
For a DP entry indexed by $(R, A, \Pi, \varphi)$, in addition to a value
as in the original DP, the following information is stored:
if $R$ is not a leaf cluster,
then for each child cluster $R_j$ (where $j \in [2^d]$) of $R$, there is
a pointer to some DP entry $(R_j, A_j, \Pi_j, \varphi_j)$ in the DP entry.

\paragraph{\textbf{Consistent Conditions.}}
Moreover, there is a collection $E$ ($E$ would not be stored) of directed edges
with vertex set in $\cup_{j} A_j$ that is \emph{consistent} 
in the following sense.
The collection of entries 
$\{(R_j, A_j, \Pi_j, \varphi_j): j \in [2^d]\}$
and the directed edges collection $E$ is consistent with $(R, A, \Pi, \varphi)$ if the following conditions hold.

\begin{itemize}
	
	\item If we consider each part in $\cup_j \Pi_j$ as a supernode, the edges $E$ will form a forest graph among these supernodes. In this graph, an edge in $E$ must connect components originating from different child clusters, meaning it connects distinct active portals from different child clusters.
	For the purpose of checking for no cycle, we assume that there is an edge
	between two parts for every common portal they share.
	Two parts from two different child clusters can share a common portal that
	lies on the common face between the two child clusters. The edges in $E$, along with the edges connecting common portals, will collectively form tree graphs among these supernodes as defined earlier.
	
	\item There can be at most one part $S \in \cup_j \Pi_j$ that contains
	the global root.
	
	\item The direction of an edge in $E$ is interpreted as going
	from a parent to a child.  For any part $S \in \cup_j \Pi_j$,
	the root of $S$ can have both outgoing and incoming edges in $E$,
	while all other portals in $S$ can only have outgoing edges in $E$.
	
	\item Several parts in $\cup_j \Pi_j$ are merged to become
	a single part in $\Pi$ (the portals lying on the common faces would be excluded since they are no longer portals for parts in $\Pi$).  This is because either an edge in $E$
	connects two different parts, or there is a common portal between two parts.
	
	\item If $a \in \cup_j A_j$ and $a$ is not a portal of $R$,
	then either $a$ is incident to some edge in $E$ or $a$ is a common 
	portal between two parts from different child clusters.
	
	If $a$ is also a portal of $R$,
	then $a$ must be in $A$ if $a$ is not incident to any edges in $E$,
	otherwise $a$ may or may not appear in $A$.
	
	\item The partition $\Pi$ of $A$ must be consistent with
	the aforementioned merging process.  (Recall that when
{\tiny }	several parts are merged, the direction of the edges in $E$
	must be consistent with the information in the $\varphi_j$'s.)
	
	Moreover, for the new part $S \in \Pi$,
	whether $S$ contains the global root or which portal is the root of $S$
	must be consistent with $\varphi$.
	
	\ignore{\item The value stored at $(R, A, \Pi, \varphi)$
		equals the sum of the values in $\{(R_j, A_j, \Pi_j, \varphi_j): j \in [2^d]\}$
		plus the weights of the edges in $E$. \wangbo{the value of a problem is the minimum over all consistent combinations of subproblems and $E$, however, to consider consistent combinations, the value of the problem should be determined. hence I think we should remove this item.}}
\end{itemize}

In what follows, we use the same $m$ and $r$ as guaranteed by Theorem~\ref{thm:struct} for the augmented DP for Steiner tree.

\begin{lemma}[Time Complexity of Computing Augmented DP]\label{lem:t_dp}
	The augmented DP for Steiner tree can be solved with time complexity $O(n)\cdot (\log n)^{O(d^{2d}\epsilon^{-d})}\cdot 2^{O(4^d d^{4d} \epsilon^{-2d})}$.
\end{lemma}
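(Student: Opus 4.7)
The plan is to bound the total cost as the product of (i) the number of hypercubes in the quadtree, (ii) the number of DP entries per hypercube, and (iii) the time to evaluate each entry as a minimum over consistent child configurations and edge sets $E$.

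First I would count DP states at a fixed hypercube $R$. Since $|A| \leq 2dr$ (at most $r$ active portals per facet, $2d$ facets), the number of subsets $A$ is at most $(m+1)^{2dr}$; the number of partitions $\Pi$ of $A$ is at most $|A|^{|A|} \leq (2dr)^{O(dr)}$ (Bell-number bound); and the augmentation $\varphi$ amounts to designating one root portal per part plus a single bit for the part that contains the global root, giving at most another $(2dr)^{O(dr)}$ factor. Altogether at most $m^{O(dr)}\cdot(2dr)^{O(dr)}$ states per hypercube.

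Second, for each state I would bound the transition time. For the $2^d$ children of $R$, independently enumerating one child state contributes $\bigl(m^{O(dr)}(2dr)^{O(dr)}\bigr)^{2^d} = m^{O(2^d dr)}\cdot(2dr)^{O(2^d dr)}$ combinations. On top of this one enumerates the directed edge set $E$ on the at most $N := 2^d\cdot 2dr$ portals in $\cup_j A_j$; using the crude bound $2^{O(N^2)}$ for directed subgraphs of $K_N$ this adds a factor of $2^{O(4^d d^2 r^2)}$. Checking the consistency conditions (forest structure on supernodes, merging of parts via $E$ and shared portals on common faces, agreement of edge directions with each $\varphi_j$, and propagation of the root-containing bit and the root portal to $\varphi$) for a fixed combination is doable in $\poly(N)$ time, which is dominated. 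Multiplying by the $O(n)$ quadtree hypercubes yields total work $O(n)\cdot m^{O(2^d dr)}\cdot (2dr)^{O(2^d dr)}\cdot 2^{O(4^d d^2 r^2)}$.

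Third, I would plug in $m = O(d^{3/2}\log n/\epsilon)^{d-1}$ and $r = O(d^2/\epsilon)^{d-1}$ from \Cref{thm:struct}. Since $\log m = O(d\log(d\log n/\epsilon))$, the $m^{O(2^d dr)}$ factor contributes $(\log n)^{O(d^{2d}\epsilon^{-d})}$ after absorbing the $2^d$ and lower-order terms into the big-$O$ exponent, while the remaining non-$\log n$ contributions (from $(2dr)^{O(2^d dr)}$ and $2^{O(4^d d^2 r^2)}$) are all dominated by $2^{O(4^d d^{4d}\epsilon^{-2d})}$ since $d^2 r^2 = O(d^{4d}\epsilon^{-2(d-1)})$. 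This matches the claimed bound.

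The main obstacle is not conceptual but bookkeeping: making sure the crude combinatorial bounds on partitions, augmentations, and edge sets all fit inside the claimed exponents. In particular, the quadratic dependence $\epsilon^{-2d}$ in the $2$-based exponent arises naturally from the $2^{O(N^2)}$ edge-set enumeration, whereas the $\log n$-exponent remains linear in $r$ because $\log n$ enters only through $m$ and not through $r$; keeping these two sources of blow-up separate is what makes the final simplification clean.
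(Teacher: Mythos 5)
Your proposal is correct and follows essentially the same route as the paper's proof: count states per hypercube via the possibilities for $A$, $\Pi$, and $\varphi$, multiply by the number of quadtree cells, and bound each transition by enumerating child-state combinations and directed edge sets $E$ with the $2^{O(N^2)}$ subgraph bound, then substitute the $m,r$ from \Cref{thm:struct}. If anything, you are slightly more explicit than the paper about raising the per-child state count to the power $2^d$ and about the $\poly(N)$ consistency check, but this changes nothing in the final bound.
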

\ignore{
\begin{proof}
	We first describe how to solve the augmented DP.
	We solve the problem in a bottom-up manner.
	The basic subproblems are those with hypercube $R$ of side-length $1$ 
	(so $R$ contains at most one point from $X$), and the values of them are computed trivially. Moreover, $\varphi(S)$ would be enumerated as: (a) ``Yes'' if the connected component containing $S$ includes the global root or (b) ``No'' otherwise for each part $S$ in the partitions of all basic subproblems.
	
	To evaluate the value of a subproblem $\mathcal{I} = (R, A, \Pi, \varphi)$, 
	let $R_1, \ldots, R_{2^d}$ be the $2^d$ child hypercubes of $R$ in the quadtree,
	enumerate all combinations of subproblems $\mathcal{I}' = \{(R_i, A_i, \Pi_i, \varphi_i)\}_{i=1}^{2^d}$ together with a set of directed edges $E$ on $\cup_{i=1}^{2^d}{A_i}$.
	For each combination of subproblems $\mathcal{I}'$ and sets of edges $E$,
	the value $w(\mathcal{I}', E)$ is defined as the sum of the
	DP values of the $2^d$ subproblems in $\mathcal{I}'$ plus the weights of the edges in $E$.
	Finally, the value of $\mathcal{I}$ is the minimum value of $w(\mathcal{I}', E')$
	over every $\mathcal{I}'$ and every $E'$ that are consistent with $\mathcal{I}$.
	
	Now we analyze the time complexity.
	We start with bounding the number of subproblems.
	The number of hypercubes in the quadtree is $O(n \log \Delta)$ since the depth is bounded by $\log \Delta$ and there are at most $O(n)$ clusters in each level (the number of Steiner points is bounded by $O(n)$).
	To enumerate $O(r)$ active portals from $m$ portals to form $A$, the number of possibilities is bounded by $m^{O(r)} = (d^{\frac{3}{2}} \epsilon^{-1} \log (nd\epsilon^{-1}))^{O(d^{2d}\epsilon^{-d})} = (d\epsilon^{-1}\log n)^{O(d^{2d}\epsilon^{-d})}$.
	Also, $\Pi$ is a partition of $O(r)$ active portals in $A$, hence there are at most $O(r)$ parts in $\Pi$. The number of possibilities of $\Pi$ is $r^{O(r)} = (d^2\epsilon^{-1})^{O(d^{2d}\epsilon^{-d})} = (d\epsilon^{-1})^{O(d^{2d}\epsilon^{-d})}$.
	As for $\varphi$, there are $O(r)$ parts in the partition and we can assign any of the $O(r)$ portals (and potentially a terminal) in each part as its root, so the number of possibilities of $\varphi$ is $r^{O(r)} = (d\epsilon^{-1})^{O(d^{2d}\epsilon^{-d})}$.
	Hence, the number of subproblems is bounded by $O(n)\cdot (d\epsilon^{-1}\log n)^{O(d^{2d}\epsilon^{-d})}$.
	To further simplify the expression, we assume without loss of generality that $d = O(\frac{\log n}{\epsilon^2})$ (since one can apply a Johnson-Lindenstrauss transform~\cite{JL84}), and, following a similar reasoning to \cite{DBLP:journals/jacm/Arora98}, we can assume $\epsilon^{-1}<\log n$ (since $\epsilon$ is supposed to be a constant that is independent of $n$). As a result, we can simplify the expression for the number of subproblems to $O(n) \cdot (\log n)^{O(d^{2d}\epsilon^{-d})}$. We use similar simplifications throughout the paper.
	
	Then, when we evaluate a subproblem, we need to enumerate subproblems corresponding to
	the $2^d$ child hypercubes, as well as a set of directed edges $E$ on the active portals of the child subproblems.
	The number of possible edge sets $E$ is bounded by $2^{(2^d r)^2} = 2^{O(4^d d^{4d} \epsilon^{-2d})}$
	(where we are using the fact that the number of possible graphs on $t$ vertices is at most $2^{O(t^2)}$, i.e., the number of subsets of edges). In addition, it takes constant time to check the consistency conditions for each subproblem.
	
	Hence, the total running time is $O(n)\cdot (\log n)^{O(d^{2d}\epsilon^{-d})}\cdot 2^{O(4^d d^{4d} \epsilon^{-2d})}$.
	\ignore{=\widetilde{O}(n)}	
	\qed
\end{proof}
}

\begin{lemma}[Weight of the Steiner Tree]\label{lem:av_query}
	After solving the augmented DP for Steiner tree, we can report the weight of the implicit Steiner tree with time complexity $O(1)$, and the value is $(1+\epsilon)$-approximate with probability $\frac{1}{2}$.
\end{lemma}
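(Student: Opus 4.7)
\bigskip
\noindent\textbf{Proof proposal.}
The plan is to read off the weight from a single designated entry of the augmented DP, argue by a bijection with Arora's (unaugmented) DP that this entry's value equals the minimum weight of an $(m,r)$-light Steiner tree on $X$, and then invoke \Cref{thm:struct} to get the approximation guarantee.

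First, I would identify the \emph{root entry} of the DP. At the topmost cluster $R_{\log \Delta}$, every terminal lies inside $R_{\log \Delta}$, so no portal activity is required to cross its facets: the relevant entry is $\mathcal{I}^\star = (R_{\log \Delta}, A = \emptyset, \Pi = \emptyset, \varphi = \emptyset)$. During the bottom-up DP computation (\Cref{lem:t_dp}), I keep a single pointer to $\mathcal{I}^\star$. Reporting the stored value at $\mathcal{I}^\star$ is then a single table lookup, which takes $O(1)$ time.

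Second, I would verify that the value stored at $\mathcal{I}^\star$ equals $\min\{w(T') : T' \text{ is an } (m,r)\text{-light Steiner tree for } X\}$. For this I compare the augmented DP with Arora's original DP entrywise. Given any $(m,r)$-light tree $T'$, restricting $T'$ to any cluster $R$ induces a subset $A$ of active portals on the facets of $R$ and a partition $\Pi$ of $A$ by connected components of $T'\cap R$; these are exactly the coordinates Arora's DP uses. The augmentation $\varphi$ merely records, for each part $S\in\Pi$, whether its component contains the pre-fixed global root and which portal of $S$ is closest (in $T'$) to the global root. Both pieces are uniquely determined by $T'$ once the global root is fixed, so the augmentation does not exclude any $(m,r)$-light tree; it only partitions the feasible set for Arora's DP into finer cells indexed by $\varphi$. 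A straightforward induction up the quadtree (using the consistency conditions exactly as stated) shows that the minimum over all consistent combinations of child entries plus edge sets $E$ equals the minimum weight of an $(m,r)$-light tree restricted to $R$ that realizes the indicated $(A,\Pi,\varphi)$. Applied at the root, where $(A,\Pi,\varphi)$ is trivial and the ``connected component containing the global root'' condition forces a single spanning component, this yields the optimal $(m,r)$-light Steiner tree weight.

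Finally, combining this with \Cref{thm:struct} and our choice of $m,r$ from that theorem: with probability at least $\tfrac{1}{2}$ over the random shifts, there exists an $(m,r)$-light Steiner tree $T'$ with $w(T')\leq (1+\epsilon)\cdot \OPT$. Since the DP value at $\mathcal{I}^\star$ is at most $w(T')$, the reported weight is $(1+\epsilon)$-approximate with probability at least $\tfrac{1}{2}$, as required. The main conceptual step is the inductive bijection argument in the second paragraph; the rest is bookkeeping, and the $O(1)$ query time is immediate from keeping the pointer to $\mathcal{I}^\star$.
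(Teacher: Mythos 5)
Your proposal is correct and follows essentially the same route as the paper: read the value from the root entry for $R_{\log\Delta}$ in $O(1)$ time, observe that the augmentation $\varphi$ is determined by (and does not exclude) any $(m,r)$-light solution so the augmented DP computes the same optimal value as Arora's original DP, and then apply \Cref{thm:struct}. Your inductive bijection argument is a more explicit version of the paper's brief remark that every valid solution of the original DP has a counterpart in the augmented DP and vice versa.
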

\ignore{
\begin{proof}
	To obtain the weight of the implicit Steiner tree, we can simply access the entry corresponding to $R_{\log \Delta}$ in the DP table. The time complexity can be $O(1)$ if we are using a hash table.
	By \Cref{thm:struct}, we know that there exsits a solution that is $(m, r)$-light and $(1+\epsilon)$-approximates the optimal Steiner tree with probability at least $\frac{1}{2}$. The original DP can find the optimal $(m, r)$-light solution and can therefore find an $(1+\epsilon)$-approximation solution with probability at least $\frac{1}{2}$.
	Note that the augmented information $\varphi$ does not affect the computation of the approximate value since all valid solutions to the original DP would still be valid to the augmented DP (though some $\varphi$ would be added in the entries) and we can find counterparts for all valid solutions to the augmented DP in the solutions to the original DP.
	\qed
\end{proof}
}

\begin{lemma}[Global Root of the Steiner Tree]\label{lem:gr_query}
	After solving the augmented DP for Steiner Tree, we can report the global root of the Steiner tree with time complexity $O(\frac{d^2}{\epsilon})^{d-1}$.
\end{lemma}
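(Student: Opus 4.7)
The plan is to retrieve the global root by inspecting only the top-level DP entry together with the $2^d$ immediate child entries it points to. The key structural observation is that although the root entry $(R_{\log\Delta}, \emptyset, \emptyset, \emptyset)$ itself carries no partition and no $\varphi$, the augmented DP stores in it pointers to the $2^d$ child DP entries $(R_j, A_j, \Pi_j, \varphi_j)$ that realize its value. By the consistency conditions, exactly one part $S^* \in \bigcup_j \Pi_j$ is flagged by its $\varphi_j$-field as ``contains the global root,'' and the same record designates the point $r(S^*) \in S^*$ as the root of that part.

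First I would follow the $2^d$ pointers stored in the root entry and, for each child entry, iterate through the at most $O(r)$ parts in $\Pi_j$, reading the flag and (if flagged) the designated $r(S)$ from $\varphi_j$. This locates the unique flagged part $S^*$ together with its designated node $r(S^*)$. I would then return $r(S^*)$ as the global root.

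The running time follows immediately: there are $2^d$ child entries, each with $|\Pi_j| \le r = O(d^2/\epsilon)^{d-1}$ parts, and reading each flag plus the corresponding $r(S)$ is $O(1)$; the total cost is $2^d \cdot O(r) = O(d^2/\epsilon)^{d-1}$, matching the claim.

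The main obstacle will be arguing that $r(S^*)$ at this level is indeed the global root, and not merely the top-level portal closest to it. This is where the priority-based convention from the pre-processing must be invoked: the priority is arranged so that the global root coincides with the designated root of the flagged part at the top level, for instance by placing the global root on one of the common faces between the $2^d$ children of $R_{\log\Delta}$ so that it appears as a portal in some $A_j$ and thus as $r(S^*)$ itself. Verifying that this convention is compatible with the augmented DP, and in particular that the ``contains-global-root'' flag can be propagated up deterministically without inflating the number of subproblems beyond what is already accounted for in \Cref{lem:t_dp}, is the subtle point of the argument.
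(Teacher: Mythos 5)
Your proposal matches the paper's proof essentially verbatim: inspect the top-level entry's $2^d$ child subproblems, scan the at most $O(r)$ parts of each $\Pi_j$ for the unique part whose $\varphi_j$-flag says it contains the global root, and return that part's designated root, for total cost $2^d\cdot O(r)=O(\frac{d^2}{\epsilon})^{d-1}$. The subtlety you flag in your last paragraph — that $r(S^*)$ is a priori only the portal of $S^*$ closest to the global root rather than the global root itself — is a real gap in precision, but the paper's own proof does not address it either and simply asserts that the root of the flagged part \emph{is} the global root, implicitly relying on the deterministic root-selection convention from the pre-processing.
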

\begin{proof}
	Let $\mathcal{I}$ be the entry whose associated cluster is $R_{\log \Delta}$ in the final solution. 
	$\mathcal{I}$ would point to $2^d$ subproblems $\mathcal{I'}=\{(R_i, A_i, \Pi_i, \varphi_i)\}_{i=1}^{2^d}$.
	The global root would belong to exactly one part $S\in \cup_{i=1}^{2^d}\Pi_i$. By checking $\varphi_i(S)$ such that $S\in \Pi_i$ for every $i\in\{1, 2, \dots, 2^d\}$, exactly one part would report ``Yes'' and the root of $S$ would be the global root.
	
	There are $2^d$ subproblems, each subproblem corresponds to a partition and each partition has $O(r)=O(\frac{d^2}{\epsilon})^{d-1}$ parts. We can check $\varphi$ of each part in $O(1)$, hence the total running time would be $2^d\cdot O(\frac{d^2}{\epsilon})^{d-1}=O(\frac{d^2}{\epsilon})^{d-1}$.
	\qed
\end{proof}

\begin{lemma}[Membership Queries]\label{lem:mem_query}
	After solving the augmented DP for Steiner Tree, we can correctly answer membership queries. The time complexity for answering the query is $O(2^d\log n)$. 
\end{lemma}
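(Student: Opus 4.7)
The plan is to answer the membership query by descending the quadtree following the pointers stored in the augmented DP, and at each level checking whether the query point $u$ is an active portal of the relevant subproblem. We first test in $O(1)$ (using a hash table over the active terminal set $X_t$) whether $u \in X_t$; in that case we immediately return \emph{Yes}, since every terminal is a node of the implicit Steiner tree. Otherwise, $u$ can appear in the implicit tree only as a Steiner point, which by construction must be a portal lying in the active set $A$ of the chosen DP entry for some cluster whose closure contains $u$.

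Starting from the chosen entry $\mathcal{I}$ for the root cluster $R_{\log \Delta}$, we do a top-down traversal: at the current entry we identify the (at most $2^d$) child clusters whose closure contains $u$, follow the stored pointer in the entry to each such child entry $(R_j, A_j, \Pi_j, \varphi_j)$, and test whether $u \in A_j$ in $O(1)$ time (we store $A_j$ as a hash set keyed on portal coordinates, which adds only a constant-factor overhead to the computation in \Cref{lem:t_dp}). If any such test succeeds, we return \emph{Yes}; otherwise we recurse into those child entries. If the traversal reaches the leaf level without ever finding $u$ in an active set, we return \emph{No}. Correctness follows from two observations: (i) the chosen DP entry for every cluster in $\mathcal{R}$ is uniquely determined by following the pointers from the root entry, so these pointers faithfully describe the implicit solution; and (ii) if $u$ lies outside the closure of a cluster $R$, then $u$ is not a portal of $R$ nor of any descendant of $R$, so pruning the descent to clusters whose closure contains $u$ cannot miss any DP entry whose active set could list $u$. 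Together these imply that the traversal visits exactly the entries whose active sets could contain $u$, so \emph{Yes} is returned if and only if $u$ is a node of the implicit tree.

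For the running time, the quadtree has depth $O(\log n)$; at each level the point $u$ lies in the closure of at most $2^d$ clusters (the worst case being when $u$ is a corner shared by $2^d$ sibling clusters); and each entry is processed in $O(1)$ by the hash lookup. The total running time is therefore $O(2^d \log n)$, matching the bound claimed. The main subtlety I anticipate is arguing that the closure-based pruning is sound without blowing up the branching factor: we must visit every cluster whose active set could contain $u$, yet never more than $2^d$ per level. This is exactly the content of observation (ii) above, which ensures we lose no relevant entry while keeping the per-level work independent of $n$.
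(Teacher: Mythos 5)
Your proposal is correct and matches the paper's argument: both descend from the root entry along the stored pointers, visit at each level the at most $2^d$ clusters containing $u$, test membership of $u$ in the active set $A$ of each visited entry, and conclude $O(2^d\log n)$ time. Your two refinements --- the upfront $O(1)$ terminal check and storing each $A$ as a hash set so the per-entry test is $O(1)$ rather than $O(r)$ --- are harmless (indeed slightly cleaner) but do not change the approach.
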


\ignore{
\begin{proof}
	Let $\mathcal{I}_0$ be the entry whose associated cluster is $R_{\log \Delta}$ in the final solution. Starting from $\mathcal{I}_0$, we can achieve the goal using BFS taking advantage of a lookup table that records all decisions made while solving the DP, i.e., each subproblem has $2^d$ pointers to its $2^d$ corresponding smaller subproblems and a pointer to its parent subproblem.
	Note that the information contained in $\varphi$ is implicitly used when using the lookup table, as the subproblems must satisfy consistent conditions in order to be included in the table.
	Specifically, we initialize a queue $P$ and push $\mathcal{I}_0$ to $P$. Every time we pop an element from $P$, denote the popped element as $\mathcal{I}=(R, A, \Pi, \varphi)$. We check if $u$ belongs to $A$ in $O(r)$ time: if yes, then $u$ is a node in the Steiner tree hence return ``Yes''; otherwise, we push all smaller subproblems of $\mathcal{I}$ that contain $u$ in its associated cluster to $P$ and repeat the above process. If ``Yes'' is not returned in the process, return ``No''. Note that there may be at most $2^d$ such subproblems since in each level there are at most $2^d$ clusters containing $u$. Therefore, there are $O(2^d\log n)$ clusters to consider. Hence the total running time for answering a membership query is $O(2^d \log n)$.  
	
	\qed
	
\end{proof}

}

\begin{lemma}[Neighbor Queries]\label{lem:pc_query}
	After solving the augmented DP for Steiner Tree, we can correctly answer the neighbor queries. The time complexity for answering the query is $O(\log n)^{O(d^{2d}\epsilon^{-d})}\cdot 2^{O(4^d d^{4d} \epsilon^{-2d})}$.
\end{lemma}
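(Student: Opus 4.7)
The plan is to first invoke the membership query from \Cref{lem:mem_query}: if $u$ is absent, return ``Null'' immediately, and otherwise use the same BFS as in that lemma to collect the set $\mathcal{L}$ of DP entries $(R,A,\Pi,\varphi)$ in the chosen solution for which $u \in A$. There are at most $O(2^d \log n)$ such entries. Every tree edge incident to $u$ must either be recorded in the edge set $E$ of some parent entry $\mathcal{I}'$ of an $\mathcal{I} \in \mathcal{L}$, or else arise because $u$ is a common portal shared between two parts from different child clusters of some $\mathcal{I}'$; the latter is detected directly from the stored pointers to the $2^d$ child subproblems.

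Since the collections $E$ are not stored, for each such parent entry $\mathcal{I}' = (R',A',\Pi',\varphi')$ I re-enumerate directed-edge sets on $\cup_j A_j$, of which there are $2^{O(4^d d^{4d}\epsilon^{-2d})}$, check the consistency conditions from Section~\ref{sec:offline} in $(\log n)^{O(d^{2d}\epsilon^{-d})}$ time per candidate, and pick the canonical (say, lexicographically first) candidate whose weight plus the stored children's DP values matches the stored DP value at $\mathcal{I}'$. From this canonical $E$ I read off the directed edges incident to $u$, and from the $\Pi_j$'s of the stored children I additionally enumerate the common-portal neighbors of $u$; the union over all parent entries of the members of $\mathcal{L}$ gives the complete neighbor set of $u$ in the implicit tree.

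To split this set into a parent and a list of children, I use the augmented information $\varphi$ at the highest-level entry $\mathcal{I} \in \mathcal{L}$ containing $u$: let $S \ni u$ be the part of $\mathcal{I}$ containing $u$. If $u = r(S)$, then the parent of $u$ is the tail of the unique incoming edge of $u$ in the canonical $E$ at $\mathcal{I}'$ (or there is no parent, when $\varphi(S)$ marks $S$ as containing the global root), and all other incident edges go to children; if $u \neq r(S)$, then the parent of $u$ is the predecessor of $u$ along the directed path from $r(S)$ to $u$ formed by the outgoing edges within $S$, and the remaining incident edges give the children. The main obstacle is verifying that the reconstructed $E$'s across the $O(\log n)$ levels together describe a \emph{single} coherent tree; this is handled by the fixed tie-breaking, which makes the implicit tree a deterministic function of the DP table, in line with the query-determinism guarantee stated in \Cref{thm:main-informal}. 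The total time is bounded by $O(2^d \log n)$ entries times $2^{O(4^d d^{4d}\epsilon^{-2d})}$ edge-set enumerations times $(\log n)^{O(d^{2d}\epsilon^{-d})}$ per consistency check, which collapses to $(\log n)^{O(d^{2d}\epsilon^{-d})} \cdot 2^{O(4^d d^{4d}\epsilon^{-2d})}$ as claimed.
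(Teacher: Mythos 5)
Your proposal is correct and follows essentially the same route as the paper: check membership first, traverse the $O(2^d\log n)$ DP entries whose clusters contain $u$, re-enumerate the (unstored) directed edge sets $E$ at each relevant parent entry to recover the edges incident to $u$, and use the edge directions together with the root information in $\varphi$ to separate the parent from the children. Your added remarks on canonical tie-breaking for determinism and on common-portal neighbors are consistent with (and slightly more explicit than) the paper's argument, and your complexity accounting matches the paper's.
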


\ignore{
\begin{proof}
	
	We first determine if $u$ is a node in the Steiner tree using a membership query. If $u$ is not a node in the Steiner Tree, then return ``Null''. 
	
	Given that we have checked that $u$ is a node in the Steiner tree, we now describe how to find the parent and children of $u$. Note that $u$ can be an active portal for at most $2^d$ level-$i$ clusters for any $i\in\{1,2,\ldots, \log \Delta\}$ (e.g., if $u$ is the center of level-$1$ clusters, then $u$ is a portal for $2^d$ clusters for any level $i$ in $\{1, 2, \dots, \log \Delta\}$), and all corresponding subproblems should be considered. We search the parent and children using BFS in a top-down manner.
	Specifically, initialize a queue $Q$ and push $\mathcal{I}_0$ to $Q$,
	where $\mathcal{I}_0$ is the entry whose associated cluster is $R_{\log \Delta}$ in the final solution.
	Every time we pop an element from $Q$, denote the popped element as $\mathcal{I}=(R, A, \Pi, \varphi)$. Let $\mathcal{I'}$ be the set of smaller subproblems of $\mathcal{I}$. Consider two cases: (a) if $u$ is an active portal in a cluster of any subproblem in $\mathcal{I'}$, enumerate all possibilities of directed edges $E$ on the active portals of $\mathcal{I'}$, store the parent and children of $u$ (if any) according to the graph constructed from the combination of $\mathcal{I'}$ and $E$ with the minimum value among all ones that are consistent with $\mathcal{I}$, and then push all subproblems that contain $u$ in their associated clusters to $Q$; (b) if $u$ is not an active portal in any cluster of the subproblems in $\mathcal{I'}$, we just push all subproblems that contain $u$ in their associated clusters to $Q$ while do not need to try finding the parent and children in this case. Since the enumerated $E$ is a directed edge set, we can tell which point is the parent of $u$ and which points are the children of $u$ in the aforementioned process. 
	
	We now analyze the time complexity. The time complexity analysis is similar to that of solving the DP except that we only need to consider $O(2^d)$ clusters in each layer hence in total $O(2^d\log n)$ clusters to consider, instead of $O(n)$. Hence the total running time for answering a parent-child query is $(\log n)^{O(d^{2d}\epsilon^{-d})}\cdot 2^{O(4^d d^{4d} \epsilon^{-2d})}$.
	\ignore{=O(\poly (\log n))}
	\qed
\end{proof}
}
	\section{Dynamic Algorithm}\label{sec:dynamic}
In Section~\ref{sec:offline}, we have introduced an algorithm that can answer all types of queries in the static setting. In this section, we will show how to make the algorithm fully-dynamic. Specifically, we propose an algorithm that can efficiently maintain a data structure that supports the insertions/deletions of point as well as the afore-mentioned queries in the dynamic setting,
as formally stated in the following \Cref{thm:main}.
As we mention, we consider update sequences whose data points always lie on a discrete grid $[\Delta]^d$ at any time step.
Let $n := [\Delta]^d$.
This $n$ is an upper bound for the number of points in the active point set at any time step,
and hence we can use $n$ as a parameter for measuring the complexity.

\begin{restatable}[]{theorem}{mainthm}
	\label{thm:main}
	There exists a fully-dynamic algorithm that for any $0 < \epsilon, \delta < 1$, and any operation sequence $\sigma$
	whose underlying dataset belongs to $[\Delta]^d$ at any time step,
	maintains an implicit Steiner tree along with its weight and its root node.
	This tree is maintained in amortized
	$ O( \log \frac{t}{\delta})  \cdot (\log n)^{O(d^{2d}\epsilon^{-d})}\cdot 2^{O(4^d d^{4d} \epsilon^{-2d})} $ time for the $t$-th operation ($\forall t \geq 1$),
	and with probability at least $1 - \delta$, it is $(1 + \epsilon)$-approximate at all time steps simultaneously.
	The subroutine for answering membership and neighbor queries are deterministic, and run in time $O(2^d\log n)$ and $(\log n)^{O(d^{2d}\epsilon^{-d})}\cdot 2^{O(4^d d^{4d} \epsilon^{-2d})}$, respectively.
\end{restatable}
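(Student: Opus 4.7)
The plan combines two ingredients: (i) an efficient per-operation update rule for the augmented DP of \Cref{sec:offline} that recomputes only the DP entries affected by a single point update within one instance, and (ii) a probability-amplification scheme using $\Theta(\log(t/\delta))$ parallel instances with independent random shifts to ensure the $(1+\epsilon)$-guarantee holds at every time step simultaneously.

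For (i), I would fix a randomly-shifted quadtree. The key observation, already flagged in \Cref{sec:tech_overview}, is that the insertion or deletion of a point $x$ can only change DP entries indexed by clusters $R$ that contain $x$; all other entries have identical recurrences and inputs as before. Since the quadtree has depth $O(\log n)$ and at most $O(1)$ clusters per level contain $x$, only $O(\log n)$ clusters need reevaluation, in bottom-up order along the leaf-to-root path from $x$. The per-cluster cost over all $(R, A, \Pi, \varphi)$ entries matches the per-cluster cost bookkeeping in \Cref{lem:t_dp}, yielding a per-instance update time of $(\log n)^{O(d^{2d}\epsilon^{-d})} \cdot 2^{O(4^d d^{4d}\epsilon^{-2d})}$.

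For (ii), \Cref{thm:struct} guarantees a $(1+\epsilon)$-approximation for a single instance at a fixed time step only with probability $1/2$, so I would maintain $k(t) := \Theta(\log(t/\delta))$ independent instances and report the minimum root-entry value across them, answering membership and neighbor queries against the instance that attains this minimum. By independence of the shifts, the probability that all $k(t)$ instances fail at the $t$-th operation is at most $2^{-k(t)} = O(\delta/t^2)$; summing this geometric tail across all $t$ bounds the overall failure probability by $\delta$. Conditioned on the random shifts, every DP table (and hence the identity of the minimizing instance) is determined deterministically by the active set, so the query subroutines inherit the deterministic runtimes of \Cref{lem:mem_query} and \Cref{lem:pc_query}. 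Selecting the minimizer adds only an $O(k(t))$ scan that is absorbed into the per-operation update cost.

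Since the time horizon is unknown, the number of live instances must grow with $t$. I would use a doubling schedule: at each time $t = 2^i$, spawn the $O(1)$ additional instances required so that $k(t) = \Theta(\log(t/\delta))$ holds. A newly spawned instance starts empty and the current active set (of size at most $2^i$) is inserted point by point; the construction cost is $O(2^i)$ times the per-instance update cost, which amortizes to an $O(1)$ overhead per preceding operation. Combined with the cost of updating all $k(t)$ live instances per operation, the overall amortized cost per operation is $O(\log(t/\delta)) \cdot (\log n)^{O(d^{2d}\epsilon^{-d})} \cdot 2^{O(4^d d^{4d}\epsilon^{-2d})}$, as claimed.

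The main obstacle I anticipate is harmonizing three requirements at once: simultaneous approximation at all time steps, an unknown horizon $t$, and an amortized bound that only charges $\log(t/\delta)$. A fixed number of instances cannot ensure the first, while naively re-initializing all instances whenever $k(t)$ grows would destroy the third. The doubling scheme resolves the tension: old instances continue to receive incremental updates and are never rebuilt, and only the small number of freshly spawned instances incur a build cost that charges cleanly against the previous operations. A secondary technical point is verifying that the update rule faithfully preserves the ``consistent conditions'' of \Cref{sec:offline} during recomputation, but this is routine once one observes that only the affected entries on a single root-to-leaf path participate in the change.
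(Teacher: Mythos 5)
Your proposal is correct and follows essentially the same route as the paper: local bottom-up recomputation of the augmented DP along the leaf-to-root path of the updated point, $\Theta(\log(t/\delta))$ independent randomly-shifted copies with a union bound over time, queries answered against the copy attaining the minimum value, and a doubling/phase schedule that spawns new copies incrementally rather than rebuilding. The only detail you gloss over is that in the augmented DP an update can also force recomputation along the path of the \emph{old} global root when the root designation changes, but this only doubles the number of affected clusters and does not alter the stated bounds.
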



Note that we have already shown in Theorem~\ref{thm:offline} that all these queries can be answered efficiently after computing the augmented DP.
Hence, our proof of Theorem~\ref{thm:main} aims to maintain the augmented DP efficiently instead of computing it from scratch after each update.

We mostly focus on the simpler version where a single update can be handled with constant success probability.
Then to make the algorithm handle all updates and succeed for all of them simultaneously,
a standard amplification is to repeat the algorithm for $\log(1 / \delta)$ times, so that the failure probability is reduced to $\delta$.
However, the operation sequence can be indefinitely long so it is not possible to set a target failure probability $\delta$ in advance,
so that the total failure probability is still bounded.
Our solution is to rebuild the data structure with a decreased target failure probability whenever sufficiently many updates are preformed. 

Next, we would introduce a dynamic data structure wrapping the static algorithm described in Section~\ref{sec:offline} that supports updates.

\paragraph{\textbf{Data Structure for Dynamic Algorithms.}}
Let $\Gamma_{\epsilon}$ be a data structure storing a set of tuples that correspond to all subproblems in the augmented dynamic programming described in Section~\ref{sec:offline}. We use the same $m$ and $r$ as guaranteed by Theorem~\ref{thm:struct} (however, note that the parameters may be different from previous sections since the size of bounding box has changed), i.e., in a tuple $(R, A, \Pi, \varphi)$, $A$ is a size of $r := O(\frac{d^{2}}{\epsilon})^{d-1}$ subset of the $m := O(d^{\frac{3}{2}} \cdot \frac{\log n}{\epsilon})^{d-1}$ portals of $R$.
$\Gamma_{\epsilon}$ supports the following operations:
\begin{itemize}
	\item \textit{Initialize().}
	Conduct the tree decomposition with random shifts on the bounded area. Then we use the augmented dynamic programming described in Section~\ref{sec:offline} to find the optimal $(m, r)$-light solution. All subproblems will be identified as a set of tuples $(R, A, \Pi, \varphi)$ and will be stored together with their evaluated values. Note each subproblem will have optimal value $0$ since there are no points when initializing.
	\item \textit{Insert($x$).}
	When we call Insert($x$), we find the set of all clusters $\mathcal{R}^x\subset\mathcal{R}$ that contain $x$. There will be exactly a cluster containing $x$ in each level. Let $R_i^x\in \mathcal{R}^x$ be the level-$i$ cluster that contains $x$. Note that when a point $x$ is inserted, if it is not the new root of the Steiner Tree, then only the subproblems associated with clusters in $\mathcal{R}^x$ will be affected.
	If $x$ is the new root, let $y$ be the old root, then the subproblems associated with clusters in $\mathcal{R}^x \cup \mathcal{R}^y$ will be affected. 
	We will update the values of these subproblems in a bottom-up manner. Specifically speaking, the subproblems associated with $R_0^{k}$ ($k$ can be either $x$ or $y$) will be updated trivially since $k$ is the only point in the cluster; for the subproblems $\mathcal{I} = (R_i^k, A, \Pi, \varphi)$, where $i\in\{1,2,\ldots, \log \Delta\}$, we consider all combinations of entry indices $I'$ and directed edges $E'$ that are consistent with $\mathcal{I}$, and update the value of $\mathcal{I}$ to the minimum sum of the DP values of $I's$ and the weight of $E'$ over those combinations.
	\item \textit{Delete($x$).}
	Deletion procedure is similar to that of insertion. We need to update the values of all subproblems whose corresponding clusters contain $x$ in a bottom-up approach.
	If $x$ is the root, let $y$ be the new root. We need to update the values of all subproblems whose corresponding clusters contain $y$ in a bottom-up approach.
\end{itemize}

\paragraph{\textbf{Pre-Computation.}} Note that the initialization of $\Gamma_{\epsilon}$ is independent of $\sigma$. Hence, we can construct enough copies of $\Gamma_{\epsilon}$ by calling \textit{Initialize}() before the start of $\sigma$ and this part of cost would not count in the time complexity analysis for each update.

\begin{lemma}[The Correctness of the Data Structure]
	\label{lem:ds_correct}
	Let $0 < \epsilon < 1$, $\Gamma_{\epsilon}$ can maintain an implicit Steiner tree that is $1+\epsilon$-approximate with probability at least $\frac{1}{2}$, and report the weight $w$ and the global root of the implicit Steiner tree.
	The time complexity for $\Gamma_{\epsilon}$ to finish a single operation of \textit{Initialize($Z$)} is $O(n)\cdot (\log n)^{O(d^{2d}\epsilon^{-d})}\cdot 2^{O(4^d d^{4d} \epsilon^{-2d})}$; and the time complexity to finish a single operation of \textit{Insert($x$)} or \textit{Delete($x$)} is $(\log n)^{O(d^{2d}\epsilon^{-d})}\cdot 2^{O(4^d d^{4d} \epsilon^{-2d})}$.
\end{lemma}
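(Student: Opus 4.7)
The plan is to establish correctness by induction on the length of the update sequence, maintaining the invariant that after each call to \textit{Insert} or \textit{Delete}, the data structure $\Gamma_{\epsilon}$ stores exactly the augmented DP table that the static algorithm of Section~\ref{sec:offline} would produce on the current active set, using the randomly-shifted quadtree fixed at \textit{Initialize}. Once this invariant holds, Theorem~\ref{thm:struct} together with the augmented-DP optimality argument sketched in Lemma~\ref{lem:av_query} yields the $(1+\epsilon)$-approximation with probability at least $1/2$, and the global root can be read off the top-level entry via Lemma~\ref{lem:gr_query}. The base case is immediate since \textit{Initialize} runs the augmented DP on an empty active set, producing the all-zero optimal table.

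For the inductive step on \textit{Insert}($x$) (the \textit{Delete} case is symmetric), I would first observe the locality property: a subproblem $(R,A,\Pi,\varphi)$ with $x \notin R$ depends recursively only on child entries whose clusters also miss $x$, and at the leaves the values depend only on the terminals lying in that unit cell. Hence such entries remain correct without re-evaluation. The only entries that can change values are those associated with clusters in $\mathcal{R}^x$, plus, when the update causes the global root to move, those in $\mathcal{R}^y$ for the previous root $y$, because $\varphi$ records which component contains the global root. Re-evaluating these entries bottom-up, using trivial leaf updates at level $0$ and then at each higher level enumerating all consistent combinations of child entries and connecting directed edge sets $E$ exactly as in the static augmented DP, restores the invariant: when a level-$i$ entry on the affected path is evaluated, every level-$(i-1)$ entry it reads is either already on the path (updated earlier in the sweep) or untouched (hence correct by induction).

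For the time bound, \textit{Initialize} runs the static augmented DP once, which by the counting in Lemma~\ref{lem:t_dp} costs $O(n)\cdot (\log n)^{O(d^{2d}\epsilon^{-d})}\cdot 2^{O(4^d d^{4d}\epsilon^{-2d})}$. For an update, the affected-cluster set has size at most $2\log\Delta = O(\log n)$. Each affected cluster hosts at most $(\log n)^{O(d^{2d}\epsilon^{-d})}$ subproblems (enumerating the active portal subset $A$ of size $O(r)$ from $m$ portals, the partition $\Pi$, and the root/global-root marker $\varphi$, using the same estimates as in Lemma~\ref{lem:t_dp} but without the cluster-count factor). Re-evaluating a single subproblem costs $2^{O(4^d d^{4d}\epsilon^{-2d})}$, since we enumerate consistent combinations of the $2^d$ child entries together with a directed edge set on their $O(2^d r)$ active portals. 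Multiplying these factors and absorbing the outer $O(\log n)$ depth into the $(\log n)^{O(\cdot)}$ term yields the advertised $(\log n)^{O(d^{2d}\epsilon^{-d})}\cdot 2^{O(4^d d^{4d}\epsilon^{-2d})}$ per-update bound.

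The main obstacle I expect is cleanly handling the root-change case, both in correctness and in bookkeeping. Because $\varphi(S)$ encodes whether the component containing $S$ contains the global root, shifting the root from $y$ to $x$ invalidates subproblems along two distinct root-to-leaf paths, and the ``Yes'' marker must land on exactly one part in each level. I would address this by fixing the deterministic root-selection rule mentioned after Theorem~\ref{thm:offline}, so that the top-level entry unambiguously determines the global root and, inductively, the unique ``Yes''-bearing part at every level; under this rule the set of invalidated entries is exactly the clusters in $\mathcal{R}^x\cup \mathcal{R}^y$, and the bottom-up sweep along both paths (plus one merge step at their lowest common ancestor) suffices, with the at-most-factor-of-two blowup absorbed in the asymptotic bound.
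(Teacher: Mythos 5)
Your proposal is correct and follows essentially the same route as the paper's proof: exploit the locality of the DP (only entries for clusters containing the inserted/deleted point, plus the old/new root's clusters when the global root moves, can change), re-evaluate them bottom-up, and reuse the subproblem-counting and per-entry evaluation cost from the static analysis with the $O(n)$ cluster count replaced by the $O(\log n)$ affected clusters. Your explicit invariant/induction framing and the discussion of the root-change bookkeeping are just a more careful spelling-out of what the paper asserts directly.
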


\ignore{
\begin{proof}
	We first analyze the time complexity.
	
	\textit{Initialize($Z$)}: This operation is the same as recomputing the augmented DP for Steiner tree with respect to $Z$. Hence the time complexity is $O(n)\cdot (\log n)^{O(d^{2d}\epsilon^{-d})}\cdot 2^{O(4^d d^{4d} \epsilon^{-2d})}$ according to Lemma~\ref{lem:t_dp}.
	
	\textit{Insert($x$)} and \textit{Delete($x$)}: When a point $x$ is inserted or deleted, the update process depends on whether $x$ is the root of a cluster or not.
	
	If $x$ is not the root: only the subproblems associated with clusters containing $x$ need to be updated. However, if $x$ is the root: the root of the Steiner Tree has changed, and thus the update process becomes more involved. 
	In this case, both the subproblems associated with clusters containing the old root and the subproblems associated with clusters containing the new root need to be updated.
	
	Recall that there is exactly a cluster containing a particular point in each layer.
	Since there are at most $O(\log n)$ levels, there are $O(\log n)$ involved hypercubes. 
	Moreover, there are $(d\epsilon^{-1}\log n)^{O(d^{2d}\epsilon^{-d})}$, $(d\epsilon^{-1})^{O(d^{2d}\epsilon^{-d})}$ and $(d\epsilon^{-1})^{O(d^{2d}\epsilon^{-d})}$ possibilities for $A$, $\Pi$, and $\varphi$, respectively. Combining these together, we can see that the number of subproblems is bounded by $(d\epsilon^{-1}\log n)^{O(d^{2d}\epsilon^{-d})} = (\log n)^{O(d^{2d}\epsilon^{-d})}$ (note that $d = O(\frac{\log n}{\epsilon^2}$).
	Recall that the time complexity of evaluating a subproblem is $2^{O(4^d d^{4d} \epsilon^{-2d})}$. Therefore, the total running time is $(\log n)^{O(d^{2d}\epsilon^{-d})}\cdot 2^{O(4^d d^{4d} \epsilon^{-2d})}$.
	
	After a single operation of \textit{Insert}, or \textit{Delete}, $\Gamma_{\epsilon}$ can maintain the optimal $(m, r)$-light solution. According to Lemma~\ref{lem:av_query} and Lemma~\ref{lem:gr_query}, we can maintain the required $w$ and the corresponding global root. Note that the time complexity for reporting the weight and the corresponding global root will be absorbed by the amortized running time.
	\qed
\end{proof}
}
\begin{lemma}[Answering the Queries]
	\label{lem:dyn_queries}
	By maintaining a copy of $\Gamma_{\epsilon}$ we can, for any time step $t$,  
	maintain the weight and the global root of an implicit Steiner tree that is $(1+\epsilon)$-approximation with probability at least $\frac{1}{2}$, and can correctly answer the membership queries and the neighbor queries about the implicit Steiner tree, at any time step $t$.
	The time complexity for answering the two types of queries are $O(2^d\log n)$ and $(\log n)^{O(d^{2d}\epsilon^{-d})}\cdot 2^{O(4^d d^{4d} \epsilon^{-2d})}$, separately.
\end{lemma}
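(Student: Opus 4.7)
The plan is to reduce Lemma~\ref{lem:dyn_queries} to the static query guarantees (Lemmas~\ref{lem:av_query}, \ref{lem:gr_query}, \ref{lem:mem_query}, \ref{lem:pc_query}) by arguing that the data structure $\Gamma_\epsilon$ always represents the state one would obtain by running the static augmented DP from scratch on the current active set $X_t$. The key invariant to establish, by induction on $t$, is the following: after processing the $t$-th update, for every tuple $(R, A, \Pi, \varphi)$ that indexes a subproblem, the stored value equals the optimum of the augmented DP with respect to $X_t$ and the (fixed) randomly-shifted quadtree chosen at initialization. The base case is trivial since $X_0 = \emptyset$ and all stored values are $0$, and the inductive step follows from the bottom-up re-evaluation described in \textit{Insert}/\textit{Delete} and formalized in Lemma~\ref{lem:ds_correct}: on an update at $x$, the only clusters whose subproblem values can change are those on the root-to-leaf path through $x$ (and additionally through the old/new global root if it changes), and these are precisely the entries that $\Gamma_\epsilon$ recomputes.

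Given this invariant, the four query procedures from Section~\ref{sec:offline} apply verbatim to the state of $\Gamma_\epsilon$ at time $t$. First I would invoke Lemma~\ref{lem:av_query} to read off the weight in $O(1)$ time by looking up the root entry indexed by $R_{\log\Delta}$, and Lemma~\ref{lem:gr_query} to produce the global root in $O(d^2/\epsilon)^{d-1}$ time by scanning the $\varphi_i$-labels of the $2^d$ child subproblems. Membership queries and neighbor queries are then answered exactly as in the static setting, via Lemma~\ref{lem:mem_query} and Lemma~\ref{lem:pc_query}, yielding the claimed $O(2^d \log n)$ and $(\log n)^{O(d^{2d}\epsilon^{-d})} \cdot 2^{O(4^d d^{4d}\epsilon^{-2d})}$ bounds respectively. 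All four procedures are deterministic once conditioned on the (fixed) random shifts, so no additional failure probability is incurred by querying.

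For the approximation guarantee, I would invoke Theorem~\ref{thm:struct} pointwise at time $t$: for the specific active set $X_t$, the probability over the initially chosen random shifts that an $(m,r)$-light $(1+\epsilon)$-approximate solution exists is at least $1/2$. Because $\Gamma_\epsilon$ solves the original Arora DP (up to the augmentation by $\varphi$, which is a refinement that does not exclude any $(m,r)$-light structure and so does not worsen the optimum), the stored DP value equals the optimum $(m,r)$-light cost on $X_t$, and hence is a $(1+\epsilon)$-approximation with probability $\geq 1/2$. This gives the per-time-step guarantee stated in the lemma; the upgrade to a union bound over all time steps is deferred to the amplification argument that produces \Cref{thm:main}.

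The main obstacle I expect is justifying that augmenting the DP with $\varphi$ does not change the optimum value: we need that for any $(m,r)$-light solution $T'$ there is a consistent choice of root designations $\varphi$ in every subproblem along the decomposition of $T'$, and conversely that any solution constructible from the augmented DP is a valid $(m,r)$-light graph connecting $X_t$. This is essentially a bookkeeping check on the consistency conditions listed in Section~\ref{sec:offline}, rooted at the global root whose designation in $\varphi$ propagates uniquely up the quadtree. Once this is verified, the approximation claim follows directly from Theorem~\ref{thm:struct} and the correctness of Arora's DP.
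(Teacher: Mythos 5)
Your proposal is correct and takes essentially the same route as the paper: the paper's proof simply initializes a copy of $\Gamma_{\epsilon}$, processes each update via \textit{Insert}/\textit{Delete}, and cites Lemma~\ref{lem:ds_correct} for the maintained approximation and Lemmas~\ref{lem:mem_query} and~\ref{lem:pc_query} for the query procedures and their time bounds. Your additional detail (the invariant that the maintained DP table matches the from-scratch static DP on $X_t$, and the check that the $\varphi$-augmentation does not change the optimum) is exactly what the paper delegates to Lemma~\ref{lem:ds_correct} and the static lemmas, so it is a more explicit version of the same argument rather than a different one.
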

\begin{proof}
	
	
	We can construct a copy of $\Gamma_{\epsilon}$ and call \textit{Initialize} to initialize. Then for each update call \textit{Insert} or \textit{Delete}. According to Lemma~\ref{lem:ds_correct}, the implicit Steiner tree maintained by $\Gamma_{\epsilon}$ is $(1+\epsilon)$-approximate with probability at least $\frac{1}{2}$, and we can report the weight and the global root of the implicit Steiner tree.
	
	
	According to Lemma~\ref{lem:mem_query} and Lemma~\ref{lem:pc_query}, we can correctly answer membership queries and parent-child queries by using those methods on $\Gamma_{\epsilon}$ with time complexity $O(2^d\log n)$ and $(\log n)^{O(d^{2d}\epsilon^{-d})}\cdot 2^{O(4^d d^{4d} \epsilon^{-2d})}$, separately.
	\qed
\end{proof}

\begin{lemma}[High Probability Guarantee by Repetition]\label{lem:repetition}
	For any $0<\epsilon,~\delta<1$ and any operation sequence $\sigma$ with length $T$, we can maintain an implicit Steiner tree that is $(1+\epsilon)$-approximate with probability at least $1-\delta$ along with its weight and global root, for all time steps simultaneously by maintaining $\log \frac{T}{\delta}$ independent copies of $\Gamma_{\epsilon}$. The time cost for each update is $O(\log \frac{T}{\delta}) \cdot (\log n)^{O(d^{2d}\epsilon^{-d})}\cdot 2^{O(4^d d^{4d} \epsilon^{-2d})}$. 
	We can also correctly answer the membership queries and the neighbor queries with probability $1$, at any time step $t$, with time complexity $O(2^d\log n)$ and $(\log n)^{O(d^{2d}\epsilon^{-d})}\cdot 2^{O(4^d d^{4d} \epsilon^{-2d})}$, separately.
\end{lemma}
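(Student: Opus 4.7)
The plan is to run $K := \lceil \log(T/\delta) \rceil$ independent copies of $\Gamma_\epsilon$, each initialized with a fresh independent random shift, and to forward every update $\sigma_t$ to all $K$ copies. The implicit Steiner tree we maintain is defined to be the one produced by whichever copy has the smallest stored weight at the current time step (with deterministic tie-breaking, e.g., by smallest copy index); the reported weight and reported global root come from that copy, and membership and neighbor queries are answered by invoking the corresponding static query routines on that same copy.

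For the approximation analysis, I would fix a time step $t$ and observe that, by Lemma~\ref{lem:ds_correct}, each copy's $(m,r)$-light optimum has weight at most $(1+\epsilon)\cdot \OPT(X_t)$ with probability at least $1/2$ over its own shift, while being at least $\OPT(X_t)$ deterministically since every such solution is a feasible Steiner tree on $X_t$. By independence across copies, the probability that all $K$ copies fail at time $t$ is at most $2^{-K}\le \delta/T$; hence the minimum weight across copies lies in $[\OPT(X_t),\,(1+\epsilon)\OPT(X_t)]$ with probability at least $1-\delta/T$. A union bound over the $T$ time steps then yields the ``all time steps simultaneously'' guarantee with probability at least $1-\delta$.

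For the time complexity, each update calls \textit{Insert} or \textit{Delete} on all $K$ copies, each taking $(\log n)^{O(d^{2d}\epsilon^{-d})}\cdot 2^{O(4^d d^{4d}\epsilon^{-2d})}$ time by Lemma~\ref{lem:ds_correct}, so the per-update cost is $O(\log(T/\delta))\cdot (\log n)^{O(d^{2d}\epsilon^{-d})}\cdot 2^{O(4^d d^{4d}\epsilon^{-2d})}$; identifying the minimum-weight copy takes an extra $O(K)$ time that is absorbed by this bound. Once the minimum-weight copy is pinpointed, membership and neighbor queries become deterministic conditional on the copies' shifts, and inherit the static running times $O(2^d\log n)$ and $(\log n)^{O(d^{2d}\epsilon^{-d})}\cdot 2^{O(4^d d^{4d}\epsilon^{-2d})}$ from Lemmas~\ref{lem:mem_query} and~\ref{lem:pc_query} respectively.

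The main subtlety to argue carefully is the union bound: the $K$ shifts are sampled once at initialization and reused across all $T$ time steps, so failure events are independent across copies at any fixed $t$ but correlated across different $t$'s within a single copy. The argument above only needs independence across copies at each fixed $t$, which is what we have, so the union bound goes through. I would also make explicit that taking the minimum of several valid Steiner-tree weights automatically preserves the trivial lower bound $\OPT(X_t)$, so we never have to ``detect'' which copy is good at runtime---we simply output the minimum and use that copy for all query answers.
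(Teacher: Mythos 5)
Your proposal is correct and follows essentially the same route as the paper: run $\log(T/\delta)$ independent copies, argue each fails with probability at most $1/2$, take the minimum-weight copy (which is safe because every copy's value is the weight of a feasible Steiner tree and hence at least $\OPT(X_t)$), and union-bound over the $T$ time steps. Your explicit remarks on the one-sided nature of the error and on independence across copies at each fixed time step are sound refinements of the same argument.
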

\begin{proof}
	We can maintain an $(1+\epsilon)$-approximation to the optimal value with probability at least $\frac{1}{2}$ by one copy of $\Gamma_{\epsilon}$ according to Lemma~\ref{lem:dyn_queries}. Having $\log \frac{T}{\delta}$ independent copies can then guarantee that with probability at least $1-(\frac{1}{2})^{\log \frac{T}{\delta}} = 1-\frac{\delta}{T}$ at least one copy can maintain an $(1+\epsilon)$-approximation to the optimal value. Then by a union bound, the probability that all time steps succeed simultaneously is $1-\delta$.
	Note that each approximate value corresponds to a feasible solution to the Euclidean Steiner tree problem, we hence fix the copy $\tilde{\Gamma}$ with the minimum approximate value to perform further queries.
	
	The time cost comes from two parts: the update (\textit{Insert} and \textit{Delete}), and finding $\tilde{\Gamma}$ after each update. The initialization of $O(\log \frac{T}{\delta})$ copies in the pre-computation. According to Lemma~\ref{lem:ds_correct}, each update takes $(\log n)^{O(d^{2d}\epsilon^{-d})}\cdot 2^{O(4^d d^{4d} \epsilon^{-2d})}$. There are $\log \frac{T}{\delta}$ copies in the algorithm, hence each update would take $\log \frac{T}{\delta}\cdot (\log n)^{O(d^{2d}\epsilon^{-d})}\cdot 2^{O(4^d d^{4d} \epsilon^{-2d})}$ time. After each update, according to Lemma~\ref{lem:av_query}, we need $O(\log \frac{T}{\delta})$ time to find $\tilde{\Gamma}$.
	
	Moreover, a global root is maintained in $\tilde{\Gamma}$ according to Lemma~\ref{lem:dyn_queries}, and we can also correctly answer the queries by performing those queries on $\tilde{\Gamma}$. Hence the correctness and time complexity for other queries are as desired.
	\qed
\end{proof}

\begin{proof}[Proof of \Cref{thm:main}]
	According to Lemma~\ref{lem:dyn_queries}, we can maintain an implicit Steiner tree that is $(1+\epsilon)$ with probability at least $\frac{1}{2}$, report the weight and the global root of the implicit Steiner tree, and correctly answer the membership queries and the parent-child queries, at any time step $t$. The time complexity for answering the two types of queries are $O(2^d\log n)$ and $(\log n)^{O(d^{2d}\epsilon^{-d})}\cdot 2^{O(4^d d^{4d} \epsilon^{-2d})}$, separately.
	
	Then we analyze the per time step cost. The time cost comes from the updates (\textit{Insert} and \textit{Delete}) (note that \textit{Initialize} is done in pre-computation). According to Lemma~\ref{lem:ds_correct}, each update would take $(\log n)^{O(d^{2d}\epsilon^{-d})}\cdot 2^{O(4^d d^{4d} \epsilon^{-2d})}$ time.

	Partition $\sigma$ into phases with different lengths indexed from $1$. Specifically, the $i$-th phase would have $2^i n$ time steps. For the $i$-th phase, we would guarantee that we can maintain an implicit Steiner tree that with probability at least $1-\frac{\delta}{2^i}$ is $(1+\epsilon)$-approximate. By a union bound on the phases, we can guarantee that all time steps succeed simultaneously with probability at least $1-\sum_i \frac{\delta}{2^i}\geq 1-\delta$. Then we need to maintain $\log \frac{2^i n}{\delta}=O(i + \log n + \log \frac{1}{\delta})$ copies for the $i$-th phase according to a similar analysis as in the proof of Lemma~\ref{lem:repetition}.
	
	Note that we do not need to reconstruct all copies at the beginning of each phase, instead, we just need to add some new copies since Theorem~\ref{thm:struct} has no dependency on the time step. We will keep inserting all points in the active point set to the new copies for later updates.
	
	
	We first analyze the reconstruction cost for the copies. The time step $t$ would lie in the $O(\log \frac{t}{n})$-th phase and there are $O(2^{\log \frac{t}{n}} n) = O(t)$ time steps in the phase. The total number of copies of $\Gamma_\epsilon$ needed for the first $t$ steps is $O(\log \frac{t}{\delta})$. For each copy, we may need to insert at most $n$ points for reconstruction. Hence the total cost for this part is $n\cdot O(\log \frac{t}{\delta}) \cdot (\log n)^{O(d^{2d}\epsilon^{-d})}\cdot 2^{O(4^d d^{4d} \epsilon^{-2d})}$.
	
	Then we analyze the update cost for the first $t$ steps. Since there are $\log \frac{2^i n}{\delta}=O(i + \log n + \log \frac{1}{\delta})$ copies in the $i$-th phase, hence the total update cost would be $\sum_{i=1}^{\log \frac{t}{n}} O((i + \log n + \log \frac{1}{\delta})\cdot (2^i) \cdot n)\cdot (\log n)^{O(d^{2d}\epsilon^{-d})}\cdot 2^{O(4^d d^{4d} \epsilon^{-2d})} = O( t\log \frac{t}{\delta}\cdot (\log n)^{O(d^{2d}\epsilon^{-d})}\cdot 2^{O(4^d d^{4d} \epsilon^{-2d})}$. 
	
	Note that $t>n$. Therefore, the amortized time cost for the time step $t$ is $\log\frac{t}{\delta} \cdot (\log n)^{O(d^{2d}\epsilon^{-d})}\cdot 2^{O(4^d d^{4d} \epsilon^{-2d})}$.
	
	The correctness and time complexity for other queries are also as desired according to Lemma~\ref{lem:repetition}. 
	\qed
\end{proof}

	\bibliographystyle{splncs04}
	\bibliography{ref}
	
	
\end{document}